\documentclass[journal]{IEEEtran}
\IEEEoverridecommandlockouts

\usepackage{cite}
\usepackage{amsmath,amssymb,amsfonts}
\usepackage{amssymb}
\usepackage{algpseudocode}
\usepackage{amsfonts}
\usepackage{graphicx}
\usepackage{fancyhdr}
\usepackage{cases}
\usepackage{textcomp}
\usepackage{extarrows}
\usepackage{amsthm}
\usepackage{amssymb}
\usepackage{algorithm,algpseudocode}
\usepackage{orcidlink}
\hypersetup{hidelinks}

\allowdisplaybreaks

\usepackage{algorithm}
\usepackage{algpseudocode}



\usepackage{multirow}
\usepackage{caption}
\usepackage{float} 
\usepackage{subcaption}
\usepackage{makecell}
\usepackage{booktabs}

\usepackage{hyperref}
\usepackage{caption}

\usepackage{multirow,tabularx}
\usepackage{multicol,mwe,float,subcaption}
\usepackage{mathtools}
\usepackage{xcolor}
\usepackage[english]{babel}
\usepackage{amsthm}
\usepackage[numbers,sort&compress]{natbib}
\algnewcommand{\Inputs}[1]{%
  \State \textbf{Inputs:}
  \Statex \hspace*{\algorithmicindent}\parbox[t]{.8\linewidth}{\raggedright #1}
}
\algnewcommand{\Initialize}[1]{%
  \State \textbf{Initialization:}
  \Statex \hspace*{\algorithmicindent}\parbox[t]{.8\linewidth}{\raggedright #1}
}
\usepackage{caption}
\captionsetup[table]{labelformat=simple, labelsep=newline, textfont=sc, justification=centering}

\newtheorem{theorem}{Theorem}

\newtheorem{assumption}{Assumption} 

\def\BibTeX{{\rm B\kern-.05em{\sc i\kern-.025em b}\kern-.08em
    T\kern-.1667em\lower.7ex\hbox{E}\kern-.125emX}}
    
\linespread{0.99}

\newcommand{\Hmat}{\mathbf{H}}  
\newcommand{\Pmat}{{\mathbf{P}}}
\newcommand{\Ymat}{{\mathbf{Y}}}

\newcommand{\p}{{\mathbf{p}}}
\newcommand{\n}{{\mathbf{n}}}
\newcommand{\y}{{\mathbf{y}}}

\begin{document}

\bstctlcite{BSTcontrol}

\title{\fontsize{22 pt}{\baselineskip}\selectfont Recursive Flow: A Generative Framework for MIMO Channel Estimation}
	\author{
	\IEEEauthorblockN{
    Zehua Jiang$^{\orcidlink{0009-0003-2550-3368}}$,
	Fenghao Zhu$^{\orcidlink{0009-0006-5585-7302}}$,
	Chongwen Huang$^{\orcidlink{0000-0001-8398-8437}}$,
    Richeng Jin$^{\orcidlink{0000-0002-1480-585X}}$,
	Zhaohui Yang$^{\orcidlink{0000-0002-4475-589X}}$,\\
	Xiaoming Chen$^{\orcidlink{0000-0001-7747-6646}}$,~\IEEEmembership{Senior Member,~IEEE},
	Zhaoyang Zhang$^{\orcidlink{0000-0003-2346-6228}}$,~\IEEEmembership{Senior Member,~IEEE},
    and M\'{e}rouane~Debbah$^{\orcidlink{0000-0001-8941-8080}}$,~\IEEEmembership{Fellow,~IEEE}}

\thanks{
Z. Jiang, F. Zhu, C. Huang, R. Jin, Z. Yang, X. Chen and Z. Zhang are with the College of Information Science and Electronic Engineering, Zhejiang University, Hangzhou 310027, China (E-mails: 
\href{mailto:jiangzehua@zju.edu.cn}{\{jiangzehua},
\href{mailto:zjuzfh@zju.edu.cn}{zjuzfh},
\href{mailto:chongwenhuang@zju.edu.cn}{chongwenhuang},
\href{mailto:richengjin@zju.edu.cn}{richengjin},
\href{mailto:yang_zhaohui@zju.edu.cn}{yang\_zhaohui},
\href{mailto:chen_xiaoming@zju.edu.cn}{chen\_xiaoming}, 
\href{mailto:ning_ming@zju.edu.cn}{ning\_ming\}@zju.edu.cn}).}

\thanks{M. Debbah is with KU 6G Research Center, Department of Computer and Information Engineering, Khalifa University, Abu Dhabi 127788, UAE (E-mail: \href{mailto:merouane.debbah@ku.ac.ae}{merouane.debbah@ku.ac.ae}).}
}

\maketitle
\pagestyle{empty}  
\thispagestyle{empty} 

\begin{abstract}
Channel estimation is a fundamental challenge in massive multiple-input multiple-output systems, where estimation accuracy governs the spectral efficiency and link reliability. In this work, we introduce Recursive Flow (RC-Flow), a novel solver that leverages pre-trained flow matching priors to robustly recover channel state information from noisy, under-determined measurements. Different from conventional open-loop generative models, our approach establishes a closed-loop refinement framework via a serial restart mechanism and anchored trajectory rectification. By synergizing flow-consistent prior directions with data-fidelity proximal projections, the proposed RC-Flow achieves robust channel reconstruction and delivers state-of-the-art performance across diverse noise levels, particularly in noise-dominated scenarios. The framework is further augmented by an adaptive dual-scheduling strategy, offering flexible management of the trade-off between convergence speed and reconstruction accuracy. Theoretically, we analyze the Jacobian spectral radius of the recursive operator to prove its global asymptotic stability. Numerical results demonstrate that RC-Flow reduces inference latency by two orders of magnitude while achieving a 2.7 dB performance gain in low signal-to-noise ratio regimes compared to the score-based baseline.

\end{abstract}

\begin{IEEEkeywords}
Deep learning, fixed-point iteration, flow matching, generative models, MIMO channel estimation.
\end{IEEEkeywords}

\section{Introduction}\label{sec:intro}

The emergence of sixth-generation (6G) wireless networks promises to deliver hyper-connectivity and terabits-per-second data rates \cite{you2021towards, wang2023road}, with massive multiple-input multiple-output (MIMO) serving as a fundamental enabler to unlock spectral efficiency gains \cite{larsson2014massive, lu2014overview}. By deploying large-scale antenna arrays, future base stations can exploit high-resolution spatial multiplexing to serve massive numbers of users simultaneously \cite{tian2025analytical}. However, the theoretical benefits of such massive arrays are fundamentally predicated on the availability of precise channel state information (CSI) at the transmitter \cite{venugopal2017channel}. As the number of antennas scales to hundreds or even thousands, the conventional strategy of orthogonal pilot transmission incurs a prohibitive signaling overhead that consumes valuable coherence resources, thereby severely degrading the effective system throughput \cite{alkhateeb2014channel, schniter2014channel}. Consequently, acquiring high-fidelity CSI from a limited number of pilot measurements has evolved into a severely ill-posed linear inverse problem \cite{donoho2006compressed}, representing a critical bottleneck that necessitates algorithmic breakthroughs beyond traditional estimation paradigms.
\par

To mitigate the ill-posedness of channel estimation, conventional approaches resort to statistical or structural regularization to constrain the solution space. The linear minimum mean square error (LMMSE) estimator, widely regarded as the performance benchmark, optimally minimizes the estimation error for Gaussian channels by leveraging second-order statistics \cite{nayebi2017semi}. However, its practical deployment is often hindered by the difficulty of acquiring accurate channel covariance matrices, particularly in highly dynamic environments where channel statistics vary rapidly. Alternatively, compressed sensing (CS) techniques have emerged as a powerful paradigm \cite{donoho2006compressed}, exploiting the sparsity of millimeter-wave channels in the angular domain to recover signals from undersampled measurements . Algorithms such as approximate message passing (AMP) \cite{schniter2014channel} and atomic norm minimization \cite{bhaskar2013atomic} have demonstrated theoretical guarantees under strict sparsity conditions. Nevertheless, these model-based methods fundamentally hinge on ideal assumptions. In realistic scenarios characterized by grid mismatch, energy leakage, or rich scattering, the sparsity assumption is frequently violated, leading to significant performance degradation and reconstruction artifacts.
\par

In order to circumvent the reliance on rigid analytical priors, the research community has pivoted towards data-driven deep learning paradigms, which learn the complex inverse mapping directly from training data. Seminal works have employed convolutional neural networks to estimate channel matrices by exploiting spatial correlations \cite{he2018deep, soltani2019deep}. Furthermore, deep unfolding networks, such as Learned D-AMP \cite{metzler2017learned}, unfold iterative algorithms into trainable layers, aiming to combine the interpretability of model-based methods with the expressivity of neural networks. Despite their computational efficiency, these discriminative approaches typically minimize a pixel-wise mean squared error (MSE) loss. Mathematically, such objectives drive the network to approximate the conditional mean of the posterior distribution. While minimizing error variance, this `regression to the mean' behavior inevitably leads to over-smoothed reconstructions \cite{zhang2017beyond}, resulting in the loss of critical high-frequency spatial details and texture information that are essential for high-precision beamforming.

\par

Recently, generative artificial intelligence has empowered physical layer modeling to transcend the constraints of deterministic regression, enabling a more faithful capture of the inherent stochastic complexity of wireless channels \cite{fan2025generative, khoramnejad2025generative, shahid2025large, zhu2025wireless}. Unlike discriminative models that collapse the posterior into a single point estimate, generative approaches explicitly learn the underlying data distribution by reversing a gradual corruption process. Pioneering frameworks, such as denoising diffusion probabilistic models \cite{NEURIPS2020_4c5bcfec} and score-based generative models \cite{song2021scorebased}, estimate the score function to iteratively refine Gaussian noise into high-fidelity channel realizations. These methods have been applied in channel synthesis \cite{lee2025generating} and semantic communications \cite{wu2024cddm}, demonstrating superior capability in modeling high-frequency spatial features. Building upon these diffusion foundations, flow matching (FM) has emerged as a cutting-edge generative paradigm \cite{lipman2023flow, papamakarios2021normalizing, kobyzev2020normalizing, liu2023flow}. By regressing a time-dependent vector field approximating the optimal transport path between prior and data distributions \cite{montesuma2024recent}, FM rectifies generative trajectories, enabling faster and more deterministic sampling than stochastic diffusion. In the broader context for linear inverse problems, general-purpose solvers such as FLOWER \cite{pourya2025flower} and PnP-Flow \cite{martin2025pnpflow} have been proposed. Treating the pre-trained flow as a generative prior, these algorithms alternate flow-based denoising and measurement-consistency projections to steer trajectories toward the data manifold and measurement subspace intersection.
\vspace{0.5em}

\par

Despite the theoretical potential of generative models, their practical deployment for high-dimensional MIMO channel estimation is hindered by specific implementation constraints. A prominent branch of recent flow-based estimators focuses on architectural innovations to accelerate inference. Specifically, the approach in \cite{liu2025flow} applies standard flow matching to model the conditional channel distribution, while \cite{fesl2024diffusion} introduces a signal-to-noise ratio (SNR) aware truncation strategy for the reverse diffusion process to reduce computational complexity. Furthermore, \cite{jiang2025one} leverages the average velocity field to collapse the integration trajectory, enabling channel estimation in a single step. While algorithmically distinct, these methods typically rely on the conjugate transpose of pilot matrices, implicitly assuming pilot orthogonality. In under-determined massive MIMO systems where pilots are limited, this assumption fails, causing severe aliasing artifacts.

\par

Alternatively, the foundational score-based method \cite{arvinte2022mimo} provides a theoretically rigorous solution for such under-determined systems by employing annealed Langevin dynamics to sample from the posterior. However, the stochastic nature of the dynamics necessitates the continuous injection of noise to correct trajectory errors, requiring thousands of iterative steps that result in prohibitive computational latency. Aiming to reduce this latency, subsequent accelerated deterministic solvers have been proposed. For instance, \cite{zhou2025generative} derives a closed-form approximation of the likelihood score to guide the diffusion reverse process, while \cite{chen2025generative, diao2025robust} utilize Tweedie’s formula \cite{efron2011tweedie, kim2021noise2score} or energy-based variational inference to approximate the posterior mean in fewer steps. Nevertheless, critical limitations persist across these deterministic modifications. Their data-consistency updates typically rely on explicit gradient descent steps, where determining the optimal step size presents a significant challenge. Besides, despite the reduction in sampling steps compared to score-based methods \cite{arvinte2022mimo}, these solvers still require hundreds of iterations, which remains computationally prohibitive for low-latency systems. More fundamentally, they operate in an open-loop manner, where the generative trajectory is initialized once from noise and refined sequentially. In low SNR regimes, the guidance provided by the likelihood term is dominated by heavy measurement noise rather than the true signal residual, causing the solver to drift towards a biased solution.

\par

To resolve the systematic bias inherent in open-loop solvers and the instability of gradient-based physical updates, a paradigm shift is required from `trajectory sampling' to `equilibrium finding'. Drawing inspiration from deep equilibrium models \cite{bai2019deep}, which define the output of a network as the fixed point of a non-linear transformation, we propose to reformulate the generative channel estimation process as a closed-loop iteration. The main contributions of this paper can be summarized as follows:

\begin{itemize}
\item We introduce Recursive Flow (RC-Flow), a novel generative framework for high-dimensional MIMO channel estimation. By reformulating the inference process as a closed-loop fixed-point iteration and leveraging anchored trajectory rectification, we theoretically guarantee global asymptotic stability towards a fixed point, providing a robust mathematical foundation for generative channel recovery.
\item We propose a closed-form proximal projection mechanism to replace heuristic gradient-based updates. This approach ensures unconditional data consistency throughout the iterative process and eliminates the need for tedious step-size tuning, thereby enhancing the solver's robustness against measurement noise compared to standard diffusion baselines.
\item We design an adaptive dual-scheduling strategy ($ \lambda,\beta$) that dynamically reallocates the computational budget during inference. This mechanism allows the solver to adaptively trade off between convergence speed and restoration precision, enabling the system to meet stringent low-latency requirements essential for real-time 6G applications.
\item Extensive simulations demonstrate that RC-Flow achieves SOTA performance across diverse scenarios. In low-SNR regimes, it effectively suppresses noise-induced hallucinations, significantly outperforming existing baselines. In high-SNR regimes, it matches the accuracy of computationally expensive methods while reducing inference latency by orders of magnitude, validating its practical superiority. The code and supplementary material for this paper is available online in \cite{sourcecode}.
\end{itemize}
\par
The paper is organized as follows: Section \ref{sec:sys} presents an introduction to the system model and flow matching. Section \ref{sec:RC-Flow} introduces the proposed RC-Flow framework, followed by the presentation of simulation results in Section \ref{sec:simulation}. Finally, Section \ref{sec:conclusion} concludes the paper.
\par
\textit{Notation}: Fonts $a$, $\mathbf{a}$, and $\mathbf{A}$ denote scalars, vectors, and matrices, respectively. $\|\mathbf{a}\|_2$ represent the 2-norm of $\mathbf{a}$. $\mathbf{A}^T$, 
$\mathbf{A}^H$, $\mathbf{A}^{-1}$, $\|\mathbf{A}\|_2$ and $\|\mathbf{A}\|_F$ represent the transpose, conjugate transpose, inverse, 2-norm and Frobenius norm of $\mathbf{A}$, respectively. The $(m,n)$-th entry of $\mathbf{A}$ is denoted by $a_{mn}$, and $|\cdot|$ denotes the modulus. $\mathcal{O}$ represents the asymptotic time complexity. Finally, we can represent the diagonal matrix and trace of matrix $\mathbf{A}$ using the notations $\mathrm{diag}(\mathbf{A})$ and $\mathrm{Tr}(\mathbf{A})$, respectively.
\par


\vspace{0mm}

\section{Preliminaries}\label{sec:sys}
\subsection{Wireless System Model}
\label{subsec:system_model}
We consider a narrowband, point-to-point MIMO communication system consisting of $N_t$ transmit antennas and $N_r$ receive antennas. The wireless propagation environment is characterized by the complex-valued CSI matrix $\mathbf{H} \in \mathbb{C}^{N_r \times N_t}$. To estimate this channel, the transmitter sends a sequence of $N_p$ pilot symbols. Let $\p_i \in \mathbb{C}^{N_t}$ denote the $i$-th pilot symbol and $\mathbf{W} \in \mathbb{C}^{N_r \times N_r}$ be the receive beamforming matrix. The received signal vector for the $i$-th pilot $\y_i$ is given by:
\begin{equation}
\mathbf{y}_i = \mathbf{W}^H (\mathbf{H} \mathbf{p}_i s + \mathbf{n}_i),
\end{equation}
where $\mathbf{n}_i$ is the complex additive white Gaussian noise and $s$ is a complex-valued scalar. For the remainder of the paper, we assume $s=1$ and a fully digital receiver where $\mathbf{W} = \mathbf{I}$. The pilot $\p_i$ is constrained to have unit-amplitude entries with low-resolution phase, where each element is a randomly selected quadrature phase shift keying symbol that remains fixed for all test samples. Assuming the channel remains constant over the duration of the pilot transmission, the received signal matrix $\mathbf{Y}$ is modeled as:
\begin{equation}
\mathbf{Y} = \mathbf{H}\mathbf{P} + \mathbf{N},
\end{equation}
where $\Ymat=[\y_1, \y_2,...\y_{N_p}] \in \mathbb{C}^{N_r \times N_p}$ is the received signal matrix, $\Pmat=[\p_1, \p_2,...\p_{N_p}] \in \mathbb{C}^{N_t \times N_p}$  is the pilot matrix, $\mathbf{N}=[\n_1, \n_2,...\n_{N_p}] \in \mathbb{C}^{N_r \times N_p}$ is the additive noise matrix. The entries of $\mathbf{N}$ are typically assumed to be independent and identically distributed complex white Gaussian noise with zero mean and variance $\sigma_{\text{pilot}}^2$.
\par
The objective of channel estimation is to recover the high-dimensional matrix $\Hmat$ from the observed measurements $\Ymat$, given the knowledge of the pilot matrix $\Pmat$. We define the pilot density as $\alpha = N_p/N_t$. In practical mmWave scenarios, pilot resources are often severely constrained ($\alpha<1$), thereby rendering the recovery of an inherently under-determined inverse problem.

\subsection{Continuous Normalizing Flows}
Continuous normalizing flows represent a class of generative models that transform a simple prior distribution $p_0(\mathbf{H})$ (e.g., a standard Gaussian distribution) into a complex data distribution $p_1(\mathbf{H})$ through a continuous-time generative process \cite{kobyzev2020normalizing, papamakarios2021normalizing}.
This process is characterized by a flow, denoted as:
\begin{equation}
    \phi_t:[0,1]\times \mathbb{R}^{d} \xrightarrow{} \mathbb{R}^{d},
\end{equation}
where $d$ represents the dimension of the data, the symbol $\times$ and $\xrightarrow{}$ denote the Cartesian Product and the mapping between data domains, respectively. Intuitively, $\phi_t$ moves a sample from its initial state at $t=0$ to a target state at $t=1$. Mathematically, for any time $t$ and any channel matrix $\mathbf{H}$ in the signal space, the flow is defined by an ordinary differential equation (ODE):
\begin{equation}
\frac{d}{dt} \phi_t(\mathbf{H}) = \mathbf{V}_t(\phi_t(\mathbf{H})), \quad \phi_0(\mathbf{H}) = \mathbf{H}_0,
\end{equation}
where $\phi_0(\mathbf{H}) = \mathbf{H}_0$ defines the boundary condition, $\mathbf{V}_t$ is the time-dependent velocity field that determines the instantaneous direction and speed of the evolution. This sample-wise transport induces a time-dependent probability density path $p_t$. The coupling between the density $p_t$ and the velocity field $\mathbf{V}_t$ is formally described by the continuity equation:
\begin{equation}
\frac{\partial p_t(\mathbf{H})}{\partial t} + \text{div}(p_t(\mathbf{H}) \mathbf{V}_t(\mathbf{H})) = 0,
\end{equation}
which ensures that the velocity field correctly describes the evolution of the probability density over time.

\subsection{Flow Matching}
The objective of flow matching (FM) is to train a deep neural network, parameterized by $\theta$, to approximate the optimal marginal velocity field $\mathbf{V}_t$ \cite{liu2025flow, lipman2023flow, martin2025pnpflow}. The network parameterizes a time-dependent vector field $\mathbf{V}_t^{\theta}$, which is optimized to minimize the MSE-based FM loss:
\begin{equation}
\mathcal{L}_{\text{FM}}(\theta) = \mathbb{E}_{t \sim \mathcal{U}[0,1], \mathbf{H} \sim p_t(\mathbf{H})} \left[ \| \mathbf{V}_t^{\theta}(\mathbf{H}) - \mathbf{V}_t(\mathbf{H}) \|_F^2 \right].
\end{equation}
However, directly minimizing $\mathcal{L}_{\text{FM}}$ is typically infeasible in practice as both the marginal density $p_t(\mathbf{H})$ and the target field $\mathbf{V}_t$ are generally inaccessible. To circumvent this, conditional flow matching (CFM) is utilized \cite{tamir2024conditional}, which simplifies the optimization by conditioning on individual data samples $\mathbf{H}_1$ drawn from the empirical distribution $p_1(\mathbf{H})$. By defining a conditional probability path $p_t(\mathbf{H}|\mathbf{H}_1)$ and a corresponding conditional velocity field $\mathbf{V}_t(\mathbf{H}|\mathbf{H}_1)$ that transports probability mass toward $\mathbf{H}_1$, the CFM objective is formulated as:
\begin{equation}
\mathcal{L}_{\text{CFM}}(\theta) = \mathbb{E}_{t, \mathbf{H}_1 \sim p_1, \mathbf{H} \sim p_t(\mathbf{H}|\mathbf{H}_1)} \left[ \|\mathbf{V}_t^{\theta}(\mathbf{H}) - \mathbf{V}_t(\mathbf{H}|\mathbf{H}_1)\|_F^2 \right].
\end{equation}
Notably, the gradients of $\mathcal{L}_{\text{CFM}}$ and $\mathcal{L}_{\text{FM}}$ with respect to $\theta$ are equivalent, provided that the marginal velocity field is the aggregation of its conditional counterparts:
\begin{equation}
    \mathbf{V}_t(\mathbf{H}) = \int \mathbf{V}_t(\mathbf{H}|\mathbf{H}_1) \frac{p_t(\mathbf{H}|\mathbf{H}_1)p_1(\mathbf{H}_1)}{p_t(\mathbf{H})} d\mathbf{H}_1.
\end{equation}
This framework enables `simulation-free' training. Specifically, for Gaussian conditional paths, samples at any time $t$ can be generated via a closed-form reparameterization, eliminating the need for expensive ODE solvers during the training phase.
\par
In our implementation, we adopt a Gaussian conditional probability path $p_t(\mathbf{H} |\mathbf{H}_1) = \mathcal{N}(\mathbf{H} ; \mu_t(\mathbf{H}_1), \sigma_t^2 \mathbf{I})$ and employ a linear interpolation strategy to construct the flow. This trajectory is defined by the displacement map:
\begin{equation}
\psi_t(\mathbf{H}_0, \mathbf{H}_1) = t \mathbf{H}_1 + (1-t) \mathbf{H}_0,
\end{equation}
where $\mathbf{H}_0 \sim \mathcal{N}(\mathbf{0}, \mathbf{I})$ denotes the initial noise distribution. Under this formulation, the mean and variance evolve as $\mu_t(\mathbf{H}_1) = t \mathbf{H}_1$ and $\sigma_t^2 = (1-t)^2$, respectively. Such a construction induces a time-independent conditional velocity field:
\begin{equation}
\mathbf{V}_t(\mathbf{H}|\mathbf{H}_1) = \frac{d}{dt}\psi_t(\mathbf{H}_0, \mathbf{H}_1) = \mathbf{H}_1 - \mathbf{H}_0.
\end{equation}
Compared to curved diffusion-based paths, this rectified linear trajectory significantly accelerates training convergence and enhances sampling efficiency during inference. Finally, to align with the estimation task, we define the generative process via a reverse-time flow, evolving from $t=1$ back to $t=0$ (ground-truth). 

\vspace{0mm}

\begin{figure*}[t!]\vspace{0mm}
	\begin{center}
		\centerline{\includegraphics[width=0.95\textwidth]{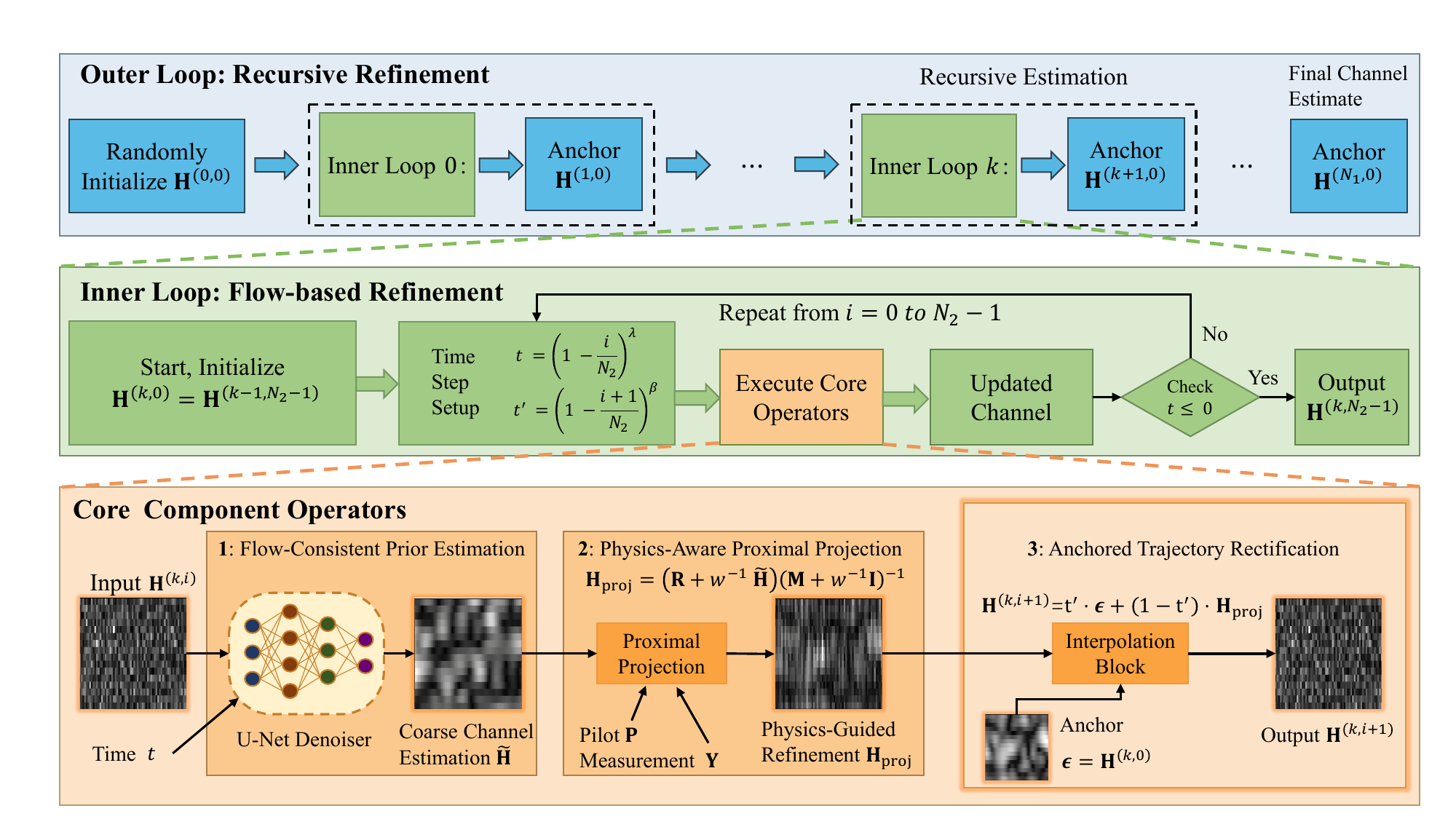}}  \vspace{-0mm}
		\captionsetup{font=footnotesize, name={Fig.}, labelsep=period} 
		\caption{\, Schematic of the RC-Flow algorithm. The method employs a nested loop structure: the Outer Loop recursively resets the anchor to correct trajectory drift, while the Inner Loop refines the estimate. The Core Component Operators  combine a deep flow prior, physics-aware proximal projection, and anchored rectification to achieve robust reconstruction.}
		\label{fig:Algorithm_Chart} \vspace{-8mm}
	\end{center}
\end{figure*}

\section{Recursive Flow Framework}\label{sec:RC-Flow}
In this section, we present the RC-Flow framework, as illustrated in Fig. \ref{fig:Algorithm_Chart} and summarized in Algorithm \ref{alg:Inference_Algorithm}. We structure the presentation hierarchically: first, we define the core component operators that serve as the fundamental building blocks; second, we detail the inner loop mechanism driven by an adaptive dual-scheduling strategy; third, we introduce the outer loop recursive anchor refinement for trajectory rectification; finally, we provide a rigorous theoretical analysis establishing the existence and local asymptotic stability of the solver's fixed point.

\subsection{Core Component Operators}
\paragraph{Flow-Consistent Prior Estimation}

To establish an informative generative prior for channel estimation, we utilize the CFM framework pre-trained on empirical channel realizations. In the training phase, the network, parameterized by $\theta$, learns to approximate a time-dependent velocity field. For each training instance, an SNR is sampled from a predefined discrete set $\mathcal{S}_{dB}$. Based on the selected SNR, complex Gaussian noise is generated and superimposed on the ground-truth channel $\mathbf{H}$. Here, $\mathbf{H}$ is globally normalized such that $\mathbb{E}[|h_{ij}|^2]=1$. We define the probability density path through linear interpolation between the noise-free source $\mathbf{H}_0 = \mathbf{H}$ and the noisy target $\mathbf{H}_1 = \mathbf{H} + \mathbf{N}$, expressed as $\mathbf{H}_t = (1-t)\mathbf{H}_0 + t\mathbf{H}_1$. To optimize the learning efficiency across the flow trajectory, the temporal variable $t \in [0,1]$ follows a logit-normal schedule. The parameters $\theta$ are optimized by minimizing the MSE between the predicted and target velocity fields:
\begin{equation}
    \mathcal{L}(\theta) = \mathbb{E}_{t, \mathbf{H}, \mathbf{N}} \left[ \| \mathbf{V}^{\theta}_t(\mathbf{H}_t, t) - (\mathbf{H}_1 - \mathbf{H}_0) \|_F^2 \right].
    \label{eq:fm_loss}
\end{equation}
\par
During inference, the objective is to recover the clean channel from a noisy intermediate state by leveraging the learned flow dynamics. Specifically, given a current state $\mathbf{H}^{(k,i)}$ and a time step $t$, the pre-trained network predicts the velocity $\mathbf{V} = \text{Model}(\mathbf{H}^{(k,i)}, t)$ tangent to the probability flow. A flow-consistent prior estimate $\tilde{\mathbf{H}}$ is then obtained by projecting the state along the trajectory back to $t=0$. This yields the following deterministic single-step estimation formula:
\begin{equation}
    \tilde{\mathbf{H}} = \mathbf{H}^{(k,i)} - t \cdot \mathbf{V}.
    \label{eq:prior_est}
\end{equation}
This operation effectively utilizes the learned generative prior to extract the intrinsic, noise-free channel structure from the current observation.

\paragraph{Physics-Aware Proximal Projection}

While the prior estimate $\tilde{\mathbf{H}}$ is consistent with the learned distribution, it may not strictly adhere to the physical constraints imposed by the pilot observations. To enforce data consistency, we employ a proximal operator that reconciles the generative prior with the measurement likelihood. This is formulated as a regularized least-squares problem:
\begin{equation}
    \mathbf{H}_{\text{proj}} = \arg\min_{\mathbf{H}} \frac{1}{2\sigma^2_{\text{pilot}}} \|\mathbf{Y} - \mathbf{HP}\|_F^2 + \frac{1}{2w} \|\mathbf{H} - \tilde{\mathbf{H}}\|_F^2,
    \label{eq:proximal_obj}
\end{equation}
where $w$ is a time-dependent regularization parameter. We adopt a variance annealing schedule defined as $w = t^2 / (t^2 + (1-t)^2)$, which dynamically balances the data fidelity of the pilot measurements with the flow-consistent prior.
\par
As a quadratic optimization problem, \eqref{eq:proximal_obj} admits a unique closed-form solution:
\begin{equation}
    \mathbf{H}_{\text{proj}} = (\mathbf{R} + w^{-1}\tilde{\mathbf{H}}) (\mathbf{M} + w^{-1}\mathbf{I})^{-1},
    \label{eq:proximal_solution}
\end{equation}
where $\mathbf{M} = \sigma_{\text{pilot}}^{-2}\mathbf{P}\mathbf{P}^H$ and $\mathbf{R} = \sigma_{\text{pilot}}^{-2}\mathbf{Y}\mathbf{P}^H$. To circumvent the prohibitive computational complexity associated with direct matrix inversion, we leverage eigenvalue decomposition (EVD). Given that $\mathbf{M}$ is a Hermitian matrix, its EVD is expressed as:
\begin{equation}
    \mathbf{M} = \mathbf{U} \boldsymbol{\Lambda} \mathbf{U}^H,
\end{equation}
where $\mathbf{U}$ is a unitary matrix and $\boldsymbol{\Lambda}$ is a diagonal matrix of eigenvalues. Consequently, the regularized inversion term in \eqref{eq:proximal_solution} can be efficiently reformulated as:
\begin{equation}
    (\mathbf{M} + w^{-1}\mathbf{I})^{-1} = \mathbf{U} (\boldsymbol{\Lambda} + w^{-1}\mathbf{I})^{-1} \mathbf{U}^H.
\end{equation}
Since $(\boldsymbol{\Lambda} + w^{-1}\mathbf{I})$ remains diagonal, the inversion is reduced to an element-wise reciprocal of its diagonal entries, significantly reducing the per-iteration computational overhead.

\paragraph{Anchored Trajectory Rectification}

Although the proximal projection enforces data consistency, directly adopting the refined estimate $\mathbf{H}_{\text{proj}}$ may cause the state to deviate from the learned probability flow trajectory. To mitigate this trajectory drift and maintain structural integrity, we implement an anchored rectification strategy designed to preserve the generative path. Specifically, we define the latent anchor $\boldsymbol{\epsilon} = \mathbf{H}^{(k,0)}$ to represent the source of the flow. The updated state for the subsequent time step $t'$ is then reconstructed by re-interpolating between the anchor and the physically refined estimate $\mathbf{H}_{\text{proj}}$, expressed as:
\begin{equation}
    \mathbf{H}^{(k,i+1)} = t' \cdot \boldsymbol{\epsilon} + (1 - t') \cdot \mathbf{H}_{\text{proj}}.
    \label{eq:rectification}
\end{equation}
This formulation ensures that the transition from noise to data adheres to the optimal straight-line path, while effectively steering the generative evolution toward the physically consistent manifold.

\begin{algorithm}[t]
\caption{Recursive Flow Algorithm} 
\label{alg:Inference_Algorithm}
\begin{algorithmic}[1]
    \Require Observations $\mathbf{Y}$, Pilots $\mathbf{P}$, $\sigma_{\text{pilot}}$, $N_1, N_2$, $\lambda, \beta$.
    \Ensure Estimated Channel $\mathbf{H}_{\text{est}}$. 
    
    \State Initialize $\mathbf{M} = \sigma_{\text{pilot}}^{-2}\mathbf{P}\mathbf{P}^H$ and $\mathbf{R} = \sigma^{-2}_{\text{pilot}}\mathbf{Y}\mathbf{P}^H$.
    \State EVD: $\mathbf{M} = \mathbf{U} \boldsymbol{\Lambda} \mathbf{U}^H$.
    \State Initialize $\mathbf{H}^{(0,0)} \sim \mathcal{CN}(\mathbf{0}, \mathbf{I})$.
    
    \For{$k \gets 0, \dots, N_1 - 1$}
      
        \For{$i \gets 0, \dots, N_2-1$}
            \State $t = (1 - i/N_2)^{\lambda}$
            \State $t' = (1 - (i+1)/N_2)^{\beta}$
            \State $\mathbf{V} = \mathrm{Model}(\mathbf{H}^{(k, i)}, t)$ 
            \State $\tilde{\mathbf{H}} = \mathbf{H}^{(k, i)} - t \cdot \mathbf{V}$ \Comment{Denoising}
            \State $w = t^2 / (t^2 + (1 - t)^2)$
            \State $\mathbf{H}_{\text{proj}} = (\mathbf{R} + w^{-1}\tilde{\mathbf{H}})\mathbf{U}(\boldsymbol{\Lambda} +w^{-1}\mathbf{I})^{-1} \mathbf{U}^H$
            \State \Comment{Projection}
            \If{$k=0$} 
                \State $\boldsymbol{\epsilon} \sim \mathcal{CN}(\mathbf{0}, \mathbf{I})$ \Comment{Initialize Anchor}
            \Else
                \State $\boldsymbol{\epsilon} = \mathbf{H}^{(k - 1, N_2 - 1)}$ \Comment{Reset Anchor}
            \EndIf
            \State $\mathbf{H}^{(k, i+1)} = t' \cdot \boldsymbol{\epsilon} + (1 - t') \cdot \mathbf{H}_{\text{proj}}$ \Comment{Interpolation}
        \EndFor
        \State $\mathbf{H}^{(k + 1, 0)} = \mathbf{H}^{(k, N_2-1)}$
    \EndFor 
    \State $\mathbf{H}_{\text{est}}$ = $\mathbf{H}^{(N_1 - 1, N_2 - 1)}$
    \State \Return $\mathbf{H}_{\text{est}}$
\end{algorithmic}
\end{algorithm}

\subsection{Inner Loop: Flow-based Refinement}
The aforementioned operators form the core processing unit of the proposed framework, integrated into an inner loop designed to progressively refine the channel estimate from a noisy observation toward a ground-truth realization. The flow-based refinement process is structured as follows:
\par
In the $k$-th outer iteration, the inner loop is initialized with the anchor $\mathbf{H}^{(k,0)} = \mathbf{H}^{(k-1,N_2-1)}$ as dictated by the recursive anchor logic. The loop then executes $N_2$ steps ($i \in [0,N_2-1]$) to traverse the probability flow trajectory.
\par
To balance trajectory stability with the capture of rapid velocity dynamics near the data manifold, we employ an adaptive dual-scheduling strategy. This approach decouples the time coordinate for feature extraction from the temporal step used for noise removal. Specifically, for the $i$-th inner step, the current time $t$ and target time $t'$ are governed by two distinct polynomial schedules:
\begin{align}
    t  &= (1 - i/N_2)^\lambda, \label{eq:lambda_sched} \\
    t' &= (1 - (i+1)/N_2)^\beta. \label{eq:beta_sched}
\end{align}
Where $\lambda$ modulates the time embedding of the network to govern the granularity of prior estimation, while $\beta$ regulates the rectification step size and the rate of anchor removal. This mechanism facilitates rapid progression in high-noise regimes while ensuring fine-grained adjustments as the channel structure emerges. The impact of $\lambda$ and $\beta$ configurations on convergence speed and estimation performance is further analyzed in Section \ref{sec:simulation}.
\par
Given the time indices $t$ and $t'$, the loop updates the channel state $\mathbf{H}^{(k,i)}$ by sequentially applying the operators defined in \eqref{eq:prior_est}--\eqref{eq:rectification}. This sequence integrates flow-consistent prior extraction, physics-aware projection, and trajectory rectification to generate the subsequent state $\mathbf{H}^{(k,i+1)}$. This procedure repeats until $i = N_2-1$ (where $t'=0$). The final state $\mathbf{H}^{(k, N_2-1)}$ serves as both the refined channel estimate for the current outer iteration and the anchor for the recursive update in the subsequent stage.

\subsection{Outer Loop: Recursive Anchor Refinement}
While the inner loop integrates the core operators, a singular traversal of the probability flow trajectory often fails to obtain an optimal channel estimate. The substantial gap between the Gaussian prior and the observed posterior, coupled with nonlinear corrections from proximal projections, typically results in suboptimal convergence. To address this, we propose a recursive anchor refinement strategy. In contrast to standard single-pass inference, this approach leverages the output of the current stage to re-initialize the anchor for the subsequent stage, effectively moving the generative flow source closer to the target data manifold.
\par
Formally, the anchor serves as the source state of the probability path at $t=1$, functioning as the fixed reference point for the trajectory rectification step in \eqref{eq:rectification}. In this recursive scheme, the final estimate from the $k$-th inner loop, denoted as $\mathbf{H}^{(k, N_2-1)}$, is assigned as the anchor for the subsequent iteration. This update rule is formulated as:
\begin{align}
    &\Hmat^{(k+1,0)} = \mathbf{H}^{(k, N_2-1)}, \\
    &\Hmat^{(0,0)} \sim \mathcal{CN}(\mathbf{0}, \mathbf{I}).
\end{align}
where $\mathbf{H}^{(0,0)}$ denotes the initial Gaussian noise. This recursion proceeds for $N_1$ outer iterations, after which the final state $\mathbf{H}^{(N_1,0)}$ is output as the estimated channel matrix.
\par
This anchor-based refinement functions as a critical self-correction mechanism. By re-initializing the flow source with the most recent estimate, the algorithm effectively compensates for trajectory drift induced by the finite discretization steps of the inner loop. Consequently, this recursive process progressively steers the estimate toward a high-fidelity solution that is consistent with both learned priors and physical measurements.

\subsection{Theoretical Analysis: Global Asymptotic Stability}
\label{subsec:theory}
To establish a rigorous theoretical footing for the proposed RC-Flow algorithm, we model the algorithm as a discrete dynamical system governed by the composite operator $\mathcal{T}: \mathbb{C}^{N_r \times N_t} \to \mathbb{C}^{N_r \times N_t}$. Specifically, a single inner-loop iteration can be expressed as:
\begin{equation}
    \mathbf{H}^{(k+1, 0)} = \mathcal{T}(\mathbf{H}^{(k, 0)}) \triangleq \mathcal{A} \circ \mathcal{P} \circ \mathcal{D} (\mathbf{H}^{(k, 0)}),
\end{equation}
where $\mathcal{D}$ represents the flow-matching denoiser, $\mathcal{P}$ denotes the physics-aware proximal projection, and $\mathcal{A}$ is the anchored rectification step.
\par
To prove the existence of a fixed point $\mathbf{H}^\star = \mathcal{T}(\mathbf{H}^\star)$, we adopt the standard assumption of \textit{Bounded Denoisers}, which is widely used in the analysis of Plug-and-Play algorithms \cite{chan2016plug}.

\begin{assumption}[Bounded Denoiser Hypothesis]
    \label{assump:bounded}
    The pre-trained flow matching model acts as a bounded denoiser. This indicates that for any input state $\mathbf{H} \in \mathbb{C}^{N_r \times N_t}$, the output magnitude predicted by the network is bounded by a constant $B_{\text{flow}} < \infty$:
    \begin{equation}
        \| \mathcal{D}(\mathbf{H})\|_F \le B_{\text{flow}}.
    \end{equation}
    This assumption implies that the generative prior imposes finite local updates and does not introduce infinite energy into the system.
\end{assumption}

Under this mild assumption, we can utilize the properties of the physical projection operator to guarantee the existence of equilibrium states.

\begin{theorem}[Existence of Fixed Point]
    \label{thm:existence}
    Consider the composite operator $\mathcal{T}$ with anchor coefficient $t'_i \in (0, 1)$ and regularization weight $w > 0$. Under Assumption \ref{assump:bounded}, there exists a compact convex set $\mathcal{K} \subset \mathbb{C}^{N_r \times N_t}$ such that $\mathcal{T}$ maps $\mathcal{K}$ to itself ($\mathcal{T}(\mathcal{K}) \subseteq \mathcal{K}$). Consequently, by Brouwer's Fixed-Point Theorem \cite{kellogg1976constructive}, the algorithm admits at least one fixed point $\mathbf{H}^\star \in \mathcal{K}$.
\end{theorem}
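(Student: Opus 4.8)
The plan is to verify the two hypotheses of Brouwer's Fixed-Point Theorem, namely that $\mathcal{T}$ is continuous and that it admits an invariant compact convex set. Since $\mathbb{C}^{N_r \times N_t}$ is linearly isometric to $\mathbb{R}^{2N_rN_t}$ under the Frobenius norm, I would take $\mathcal{K}$ to be the closed Frobenius ball $\mathcal{K} = \{\mathbf{H} : \|\mathbf{H}\|_F \le \rho\}$, which is automatically compact and convex in finite dimensions; the task then reduces to choosing the radius $\rho$ so that $\mathcal{T}(\mathcal{K}) \subseteq \mathcal{K}$.

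The core of the argument is a uniform norm bound on the image of each constituent operator. First I would rewrite the proximal projection as an affine map $\mathcal{P}(\mathbf{X}) = \mathbf{C} + \mathbf{X}\mathbf{G}$, where $\mathbf{C} = \mathbf{R}(\mathbf{M} + w^{-1}\mathbf{I})^{-1}$ and $\mathbf{G} = w^{-1}(\mathbf{M} + w^{-1}\mathbf{I})^{-1}$. Because $\mathbf{M} = \sigma_{\text{pilot}}^{-2}\mathbf{P}\mathbf{P}^H$ is Hermitian positive semidefinite and $w > 0$, every eigenvalue of $\mathbf{G}$ equals $w^{-1}/(\lambda_j + w^{-1}) \in (0,1]$, so $\|\mathbf{G}\|_2 \le 1$ and $\mathcal{P}$ is non-expansive up to the constant shift $\mathbf{C}$. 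Combined with Assumption~\ref{assump:bounded}, which caps the denoiser output at $\|\mathcal{D}(\mathbf{H})\|_F \le B_{\text{flow}}$, this yields $\|\mathcal{P}(\mathcal{D}(\mathbf{H}))\|_F \le \|\mathbf{C}\|_F + B_{\text{flow}} =: B_P$ uniformly in $\mathbf{H}$.

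With this bound in hand, the rectification step does the rest. Recalling that during the $k$-th inner loop the anchor $\boldsymbol{\epsilon}$ is frozen to the input $\mathbf{H}^{(k,0)}$, each update $\mathbf{H}^{(k,i+1)} = t'_i\boldsymbol{\epsilon} + (1-t'_i)\mathcal{P}(\mathcal{D}(\mathbf{H}^{(k,i)}))$ is a convex combination of the anchor and a point of Frobenius norm at most $B_P$. Choosing $\rho = B_P$ places both ingredients inside $\mathcal{K}$, and since $\mathcal{K}$ is convex the combination remains in $\mathcal{K}$; an induction over $i = 0,\dots,N_2-1$ then shows the whole trajectory stays in $\mathcal{K}$, so $\mathcal{T}(\mathcal{K}) \subseteq \mathcal{K}$. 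Continuity of $\mathcal{T}$ follows from composing the continuous network parameterizing $\mathcal{D}$ with the affine maps $\mathcal{P}$ and $\mathcal{A}$, and Brouwer's theorem then delivers a fixed point $\mathbf{H}^\star \in \mathcal{K}$.

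I expect the main obstacle to be the continuity requirement of Brouwer's theorem, for which Assumption~\ref{assump:bounded} supplies only boundedness: I would close this gap by invoking the continuous differentiability of the neural velocity field, so that $\mathcal{D}(\mathbf{H}) = \mathbf{H} - t\,\mathrm{Model}(\mathbf{H},t)$ is continuous. A secondary technical point is ensuring invariance for the composite operator rather than a single step; the convex-combination structure with a fixed anchor resolves this cleanly, since the convexity of $\mathcal{K}$ makes the per-step invariance inductive without requiring $\mathcal{T}$ to be contractive.
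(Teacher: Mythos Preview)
Your proposal is correct and mirrors the paper's own proof essentially step for step: the paper likewise decomposes $\mathcal{P}$ using the spectral bound $\|w^{-1}(\mathbf{M}+w^{-1}\mathbf{I})^{-1}\|_2 \le 1$, combines it with Assumption~\ref{assump:bounded} to obtain a uniform bound $R_{\text{limit}} = C_{\text{obs}} + B_{\text{flow}}$ (your $B_P$), takes the closed Frobenius ball of that radius as $\mathcal{K}$, and then runs the same convex-combination induction over the inner loop with the anchor identified as $\mathbf{H}^{(k,0)}$. Your anticipation of the continuity gap and its resolution via smoothness of the network is exactly how the paper handles it as well.
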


\begin{proof}
    See Appendix \ref{app:proof_existence}.
\end{proof}


Having established the existence of a fixed point, we now analyze the stability of the RC-Flow solver. We define the total Jacobian of the outer-loop operator $\mathcal{T}$ at the fixed point $\mathbf{H}^\star$ as $\mathbf{J}_{\mathcal{T}} \triangleq \frac{\partial \mathrm{vec}(\mathbf{H}^{(k+1)})}{\partial \mathrm{vec}(\mathbf{H}^{(k)})} |_{\mathbf{H}^\star}$. In high-dimensional inverse problems, generative priors often exhibit local expansion (i.e., Lipschitz constant $> 1$) to recover high-frequency textures. However, our framework relies on the geometric interplay between the prior and the physical constraints. We encapsulate this via the following hypothesis.

\begin{assumption}[Spectral Contraction of Composite Operator]
    \label{assump:contraction}
    Let $\mathbf{J}_{\mathcal{D}, i}$ and $\mathbf{J}_{\mathcal{P}, i}$ be the Jacobians of the flow denoiser and the physics-aware projection at the $i$-th inner step, respectively. We assume that the projection operator imposes sufficient structural constraints to suppress the expansive modes of the denoiser, such that the spectral radius of the composite Jacobian $\mathbf{T}_i \triangleq \mathbf{J}_{\mathcal{P}, i}\mathbf{J}_{\mathcal{D}, i}$ satisfies:
    \begin{equation}
        \rho(\mathbf{T}_i) \le \gamma < 1.
    \end{equation}
    This condition implies that the error components orthogonal to the physical manifold (e.g., hallucinations) are effectively nullified, ensuring strict contraction in the eigen space of the operator.
\end{assumption}

In fact, the above assumption always holds in our numerical experiments, more details can be found in Appendix \ref{app:proof_spectral_radius} of the Supplementary Material. Under this assumption, we have the following theorem:

\begin{theorem}[Global Asymptotic Stability in Generalized Metric Space]
    \label{thm:stability}
    Suppose the inner-loop operators satisfy Assumption \ref{assump:contraction}, and the anchor coefficients $\{t'_i\}_{i=0}^{N_2-1}$ are time-varying within $[0, 1)$. Then there exists a generalized vector norm $\|\cdot\|_*$ under which the total Jacobian $\mathbf{J}_{\mathcal{T}}$ is strictly contractive, i.e., $\|\mathbf{J}_{\mathcal{T}}\|_* < 1$. Consequently, the recursive iteration converges linearly to the fixed point $\mathbf{H}^\star$ for any initialization within the basin of attraction.
\end{theorem}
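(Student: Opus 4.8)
The plan is to view $\mathcal{T}$ as a differentiable self-map on $\mathrm{vec}(\mathbf{H})$, show that its Jacobian $\mathbf{J}_{\mathcal{T}}$ at the fixed point is a strict contraction in a tailored norm, and then invoke Ostrowski's point-of-attraction theorem to obtain local linear convergence. First I would differentiate the composite inner loop step by step. Each step reads $\mathbf{H}^{(k,i+1)} = t'_i\boldsymbol{\epsilon} + (1-t'_i)\,\mathcal{P}(\mathcal{D}(\mathbf{H}^{(k,i)}))$ with the anchor $\boldsymbol{\epsilon}=\mathbf{H}^{(k,0)}$ held fixed across the inner loop, so writing $\mathbf{G}_i \triangleq \partial\,\mathrm{vec}(\mathbf{H}^{(k,i)})/\partial\,\mathrm{vec}(\mathbf{H}^{(k,0)})$ the chain rule yields the affine recursion \[ \mathbf{G}_{i+1} = t'_i\mathbf{I} + (1-t'_i)\,\mathbf{T}_i\,\mathbf{G}_i,\qquad \mathbf{G}_0=\mathbf{I}, \] with $\mathbf{T}_i=\mathbf{J}_{\mathcal{P},i}\mathbf{J}_{\mathcal{D},i}$ and $\mathbf{J}_{\mathcal{T}}$ equal to the terminal $\mathbf{G}$. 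Here the additive $t'_i\mathbf{I}$ encodes the explicit anchor dependence of the rectification step, while $(1-t'_i)$ is the Jacobian of the interpolation with respect to $\mathbf{H}_{\text{proj}}$.

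Next I would make the factors explicit. The denoiser gives $\mathbf{J}_{\mathcal{D},i}=\mathbf{I}-t_i\,\partial\mathbf{V}^{\theta}/\partial\mathbf{H}$; the projection, being affine in $\tilde{\mathbf{H}}$, gives $\mathbf{J}_{\mathcal{P},i}=w_i^{-1}\big((\mathbf{M}+w_i^{-1}\mathbf{I})^{-1}\big)^{T}\otimes\mathbf{I}$, whose eigenvalues $1/(w_i\lambda_j+1)\in(0,1]$ make it nonexpansive in the spectral norm and, importantly, place every $\mathbf{J}_{\mathcal{P},i}$ in the common eigenbasis $\mathbf{U}$ of $\mathbf{M}$. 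Assumption \ref{assump:contraction} then supplies the per-step bound $\rho(\mathbf{T}_i)\le\gamma<1$, expressing that the contractive projection suppresses the possibly expansive modes of the denoiser.

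The heart of the argument is to upgrade these per-step spectral bounds into a single contractive norm for the whole product. Using the Householder/Lyapunov fact that $\rho(\mathbf{A})<1$ admits an induced norm with $\|\mathbf{A}\|<1$, and exploiting the shared eigenstructure of the projection factors, I would fix $\gamma'\in(\gamma,1)$ and a common norm $\|\cdot\|_*$ with $\|\mathbf{T}_i\|_*\le\gamma'$ and $\|\mathbf{I}\|_*=1$. Setting $g_i=\|\mathbf{G}_i\|_*$ and applying the triangle inequality to the recursion gives the scalar comparison $g_{i+1}\le t'_i+(1-t'_i)\gamma' g_i$; since $t'_i\in[0,1)$ this keeps $g_i\le1$ by induction from $g_0=1$, and because the terminal schedule forces $t'_{N_2-1}=0$ the final step collapses to $g_{N_2}\le\gamma' g_{N_2-1}\le\gamma'<1$, i.e. $\|\mathbf{J}_{\mathcal{T}}\|_*<1$.

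Finally, $\|\mathbf{J}_{\mathcal{T}}\|_*<1$ forces $\rho(\mathbf{J}_{\mathcal{T}})<1$, and Ostrowski's theorem certifies that $\mathbf{H}^\star$ is a point of attraction with locally linear rate $\|\mathbf{J}_{\mathcal{T}}\|_*$, which is the asserted stability. The main obstacle is precisely the norm-unification step: spectral radius is neither submultiplicative nor does a family with each $\rho(\mathbf{T}_i)<1$ generally admit a common norm below one (two nilpotents with $\rho=0$ can multiply to an expansive map), so the ``generalized vector norm'' must draw on genuine shared structure. I would make this rigorous by either leveraging the simultaneous (block-)triangularization induced by the fixed projection basis $\mathbf{U}$, or by replacing the scalar bound with a nonnegative comparison matrix $\mathbf{C}$ (a vector-Lyapunov construction) and proving $\rho(\mathbf{C})<1$; the latter is where the substantive work sits, and it is underwritten by the paper's observation that Assumption \ref{assump:contraction} holds throughout the numerical experiments.
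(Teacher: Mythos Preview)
Your proposal is correct and shares the paper's skeleton: both derive the affine Jacobian recursion $\mathbf{G}_{i+1}=t'_i\mathbf{I}+(1-t'_i)\mathbf{T}_i\mathbf{G}_i$ from the anchored interpolation, both \emph{postulate} (rather than prove) a common induced norm with $\|\mathbf{T}_i\|_*\le\gamma'<1$---the paper calls this an ``analytical extension'' of Assumption~\ref{assump:contraction} justified by smooth parameter evolution---and both conclude $\|\mathbf{J}_{\mathcal{T}}\|_*<1$. The one genuine difference is how the bound is extracted. The paper unrolls the recursion into a closed-form sum $\mathbf{J}_{\mathcal{T}}=\mathbf{\Phi}_{N_2,0}+\sum_i t'_i\,\mathbf{\Phi}_{N_2,i+1}$ of path-transition operators, observes that the scalar weights $P_0,\{t'_iP_{i+1}\}$ telescope to a partition of unity, and concludes that the norm is a convex combination of powers $(\gamma')^k$, hence strictly below $1$. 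Your scalar-comparison route---set $g_i=\|\mathbf{G}_i\|_*$, bound $g_{i+1}\le t'_i+(1-t'_i)\gamma' g_i$, induct $g_i\le1$, then use $t'_{N_2-1}=0$ to force the terminal $g_{N_2}\le\gamma'$---is more elementary and sidesteps the expansion machinery entirely; the paper's version, in exchange, makes the dependence on the full schedule $\{t'_i\}$ explicit (the bound is a weighted average of \emph{all} $(\gamma')^{N_2-i}$, not just the last factor). Your candid flagging of the norm-unification step as the real obstacle, together with the concrete suggestion to exploit the shared projection eigenbasis $\mathbf{U}$ or a vector-Lyapunov comparison, actually goes further than the paper, which simply asserts the common norm exists.
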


\begin{proof}
    See Appendix \ref{app:proof_stability}.
\end{proof}


\par
Therefore, due to the equivalence of norms in finite-dimensional spaces, convergence in $\|\cdot\|_*$ implies global asymptotic stability in the standard Euclidean norm $\|\cdot\|_2$.

\vspace{0mm}
\section{Numerical Results}\label{sec:simulation}

In this section, we provide a comprehensive numerical evaluation of the proposed RC-Flow framework across a wide range of operational conditions. Following the algorithmic established in Section \ref{sec:RC-Flow}, the performance and robustness of the scheme are systematically evaluated, with the resulting data and comparative analyses detailed in the following subsections.

\begin{table}[t]
    \centering
    \caption{System Configurations}
    \label{tab:hyperparams}
    \renewcommand{\arraystretch}{1.2}
    \begin{tabular}{l|c}
        \hline
        \textbf{Parameter} & \textbf{Value} \\
        \hline
        \multicolumn{2}{c}{\textit{Training Configurations}} \\
        \hline
        Optimizer & Adam \\
        Learning Rate & $1 \times 10^{-4}$ \\
        Batch Size & 256 \\
        Total Epochs & 1600 \\
        Training SNR Range & $\{-10, -5, \dots, 30\}$\\
        Time Sample & $\sigma(\xi)$ \\
        EMA Decay Rate & 0.999 \\
        Precision & bf16 \\
        \hline
        \multicolumn{2}{c}{\textit{Inference Configurations}} \\
        \hline
        Time Schedule Parameter ($\lambda,\beta$)  & 2,2 \\
        Model Selection (Epoch) & 1600 \\
        Inner Loop Number &  (\ref{eq:Inner Loop Steps}) \\
        $N_{\max}$, $N_{\min}$ &  50,3 \\
        \hline
        \multicolumn{2}{c}{\textit{Network Architecture}} \\
        \hline
        Input/Output Channels & 2 (Real/Imag) \\
        Base Model Channels & 32 \\
        Channel Multipliers & (1, 2, 4) \\
        Residual Blocks per Level & 2 \\
        Time Embedding Dimension & 128 \\
        Group Normalization Numbers & 8 \\
        Activation (ResNet, Time MLP) & SiLU, GELU \\
        Downsampling/Upsampling Steps & 2 \\
        \hline
    \end{tabular}
    \vspace{-4mm}
\end{table}

\subsection{System Configurations}
\paragraph{Dataset Generation}
We evaluate the performance of the proposed RC-Flow using the 3GPP clustered delay line (CDL) channel models, which represent realistic propagation environments for 5G/6G MIMO systems \cite{3gpp.38.901}. The dataset comprises four distinct channel profiles: CDL-A, CDL-B, CDL-C, and CDL-D. Specifically, CDL-D represents a line-of-sight (LOS) environment, whereas CDL-B and CDL-C represent non-line-of-sight (NLOS) scenarios. CDL-A contains a mixture of both LOS and NLOS components. For each channel model, we generate 10,000 independent channel realizations for training and 100 realizations for testing. Furthermore, a mixed dataset is constructed by aggregating the training samples from all four CDL profiles.

\paragraph{Network Architecture}
The CFM network employs a time-conditioned U-Net architecture, as illustrated in Fig. \ref{fig:Unet}, where real and imaginary channel components are processed as a two-channel real-valued input. The backbone follows an encoder-decoder structure with multi-level ResNet blocks \cite{he2016deep}, whose specific configurations are detailed in Table \ref{tab:hyperparams}. Temporal information $t$ is integrated into each block via sinusoidal positional encoding and a scale-and-shift mechanism.

\paragraph{Training and Inference}
The model is optimized via the CFM objective using the Adam optimizer. To ensure robustness, the training SNR is uniformly sampled from $\mathcal{S}_{\text{dB}} \in \{-10, -5, \dots, 30\}$ for each batch. Additionally, the time step $t$ is sampled from a logit-normal distribution, defined as $t = \sigma(\xi)$ where $\xi \sim \mathcal{N}(0, 1)$. Furthermore, an exponential moving average (EMA) is maintained to stabilize the weights, with comprehensive configurations detailed in Table \ref{tab:hyperparams}.
\par
During inference, the inner iteration number $N_2$ is adaptively scheduled based on the noise standard deviation to balance estimation accuracy and computational complexity, as follows:
\begin{equation}
    N_2 = N_{\min} + \Delta N, 
    \label{eq:Inner Loop Steps}
\end{equation}
\begin{equation}
    \Delta N = \left\lfloor (N_{\max}-N_{\min}) \cdot \left( \frac{\log_{10}(\sigma_{\max} / \sigma_\text{pilot})}{\log_{10}(\sigma_{\max} / \sigma_{\min})} \right)^2 \right\rfloor,
\end{equation}
where $N_{\max}$ and $N_{\min}$ represent the predefined upper and lower bounds of the inner loop iteration numbers, while $\sigma_{\max}$ and $\sigma_{\min}$ denote the noise standard deviation boundaries that delineate the dynamic range of the environment. 

\begin{figure}[t]\vspace{0mm}
	\begin{center}
		\centerline{\includegraphics[width=0.45\textwidth]{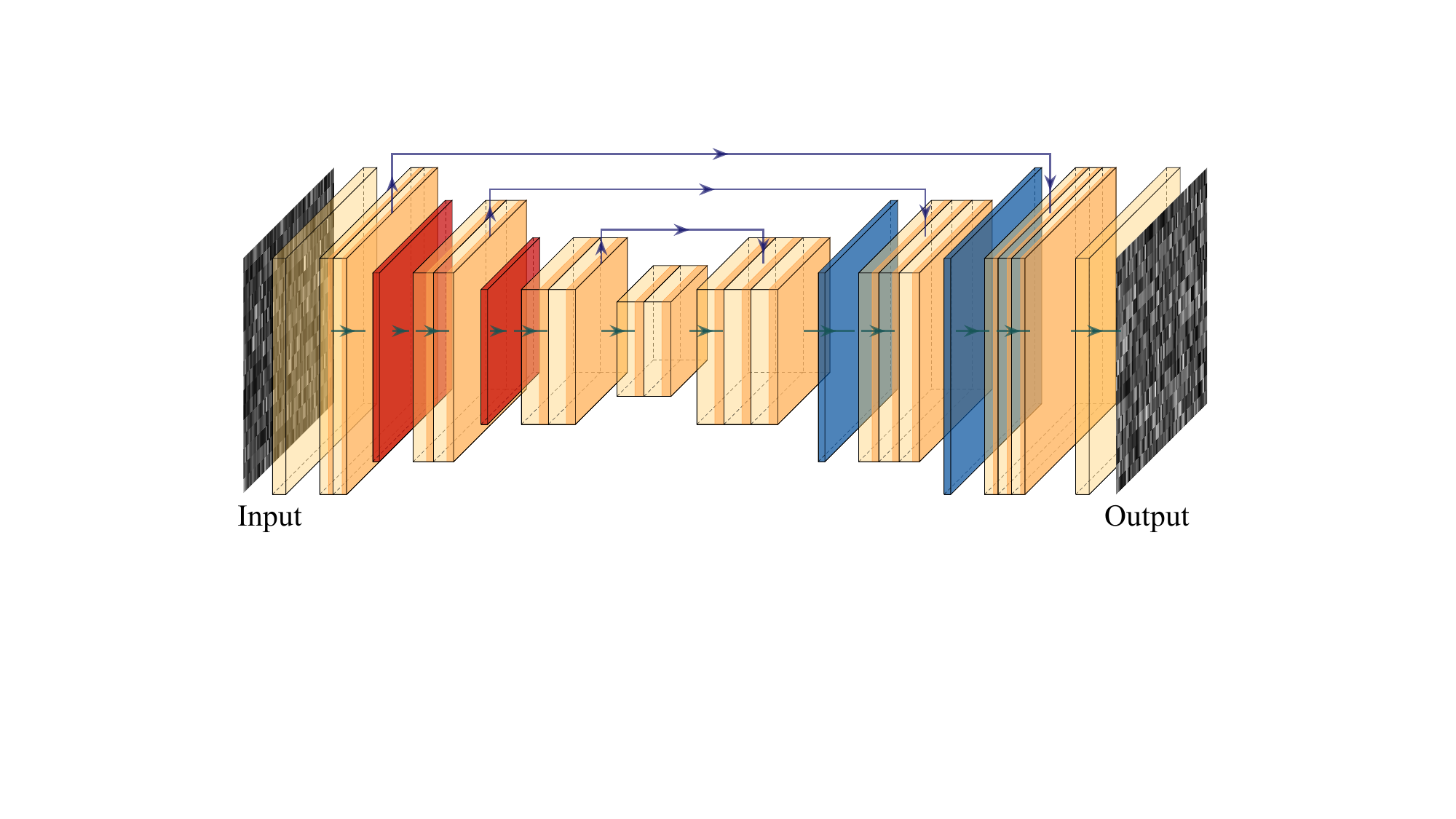}}  \vspace{-0mm}
		\captionsetup{font=footnotesize, name={Fig.}, labelsep=period} 
		\caption{\, The U-net architecture for flow matching. The yellow block represents the convolutional residual module, the red block represents down-sampling while the blue block represents up-sampling.}
		\label{fig:Unet} \vspace{-8mm}
	\end{center}
\end{figure}

\begin{figure}[t]\vspace{0mm}
	\begin{center}
		\centerline{\includegraphics[width=0.45\textwidth]{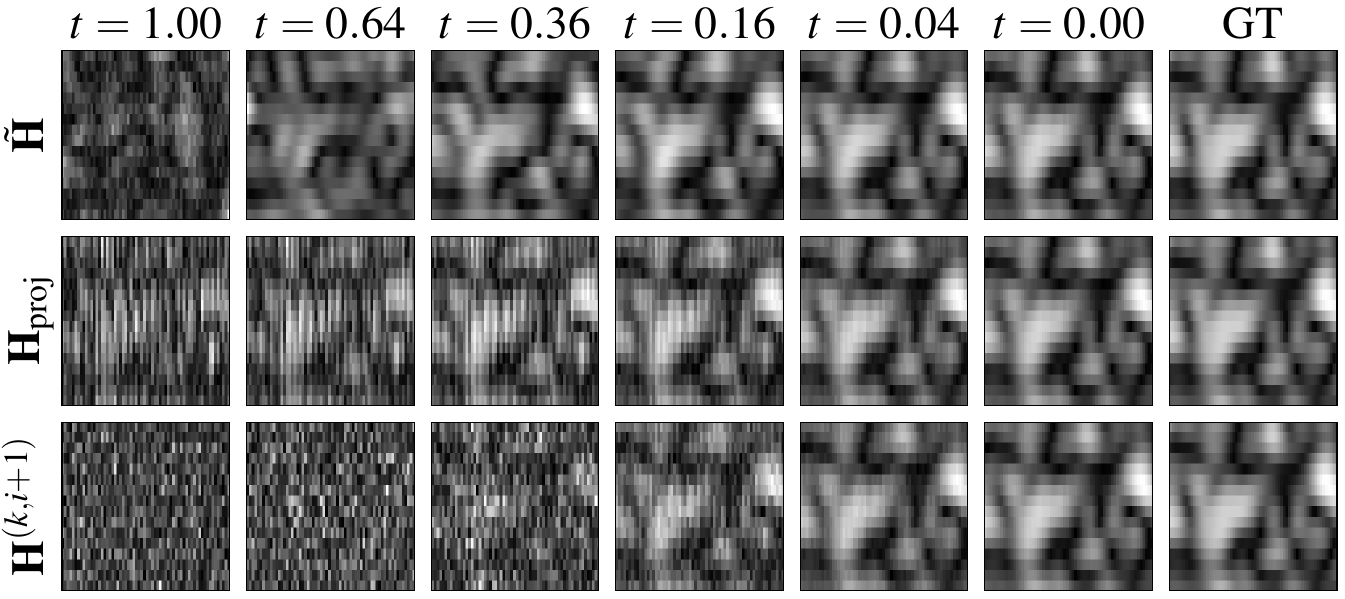}}  \vspace{-0mm}
		\captionsetup{font=footnotesize, name={Fig.}, labelsep=period} 
		\caption{\, Visual evolution of the channel estimation during an outer loop. The process recovers the channel structure from noise ($t=1$) to the target distribution ($t=0$). GT refers to ground truth channel. Rows display the intermediate outputs of the flow denoising ($\tilde{\mathbf{H}}$), physical projection ($\mathbf{H}_{\text{proj}}$), and anchor interpolation ($\mathbf{H}^{(k, i+1)}$), respectively.}
		\label{fig:evolution process} \vspace{-6mm}
	\end{center}
\end{figure}

\begin{figure*}[t!]\vspace{0mm}
	\begin{center}
		\centerline{\includegraphics[width=1\textwidth]{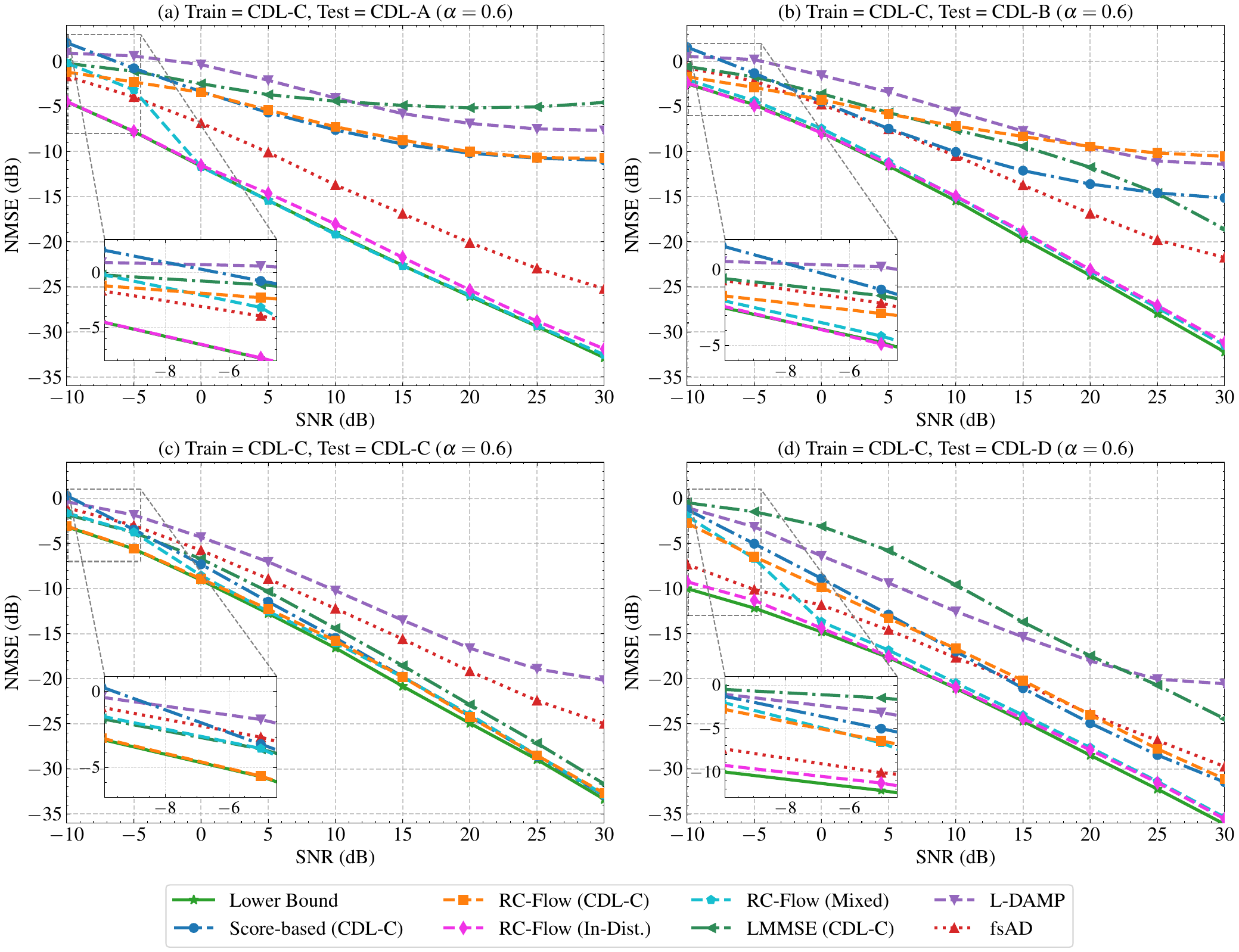}}  \vspace{-0mm}
		\captionsetup{font=footnotesize, name={Fig.}, labelsep=period} 
		\caption{\, Performance comparison of various schemes trained on CDL-C channels and evaluated across diverse channel conditions in a $16 \times 64$ mmWave MIMO system when $\alpha=0.6$.}
		\label{fig:cross scenarios} \vspace{-6mm}
	\end{center}
\end{figure*}

\paragraph{Hardware and Software Environment}
All experiments were implemented in Python 3.10 using the PyTorch 2.6.0 on the Ubuntu 22.04 system. The training and inference processes were accelerated using an NVIDIA A100 GPU. 

\subsection{Baselines}
To validate the effectiveness of the proposed method, we compare it against several baselines, ranging from classical estimation theory to recent deep learning-based approaches. The following baselines are evaluated in the simulations:

\begin{itemize}
    \item \textbf{Linear Minimum Mean Square Error (LMMSE)} \cite{nayebi2017semi}: This estimator represents the optimal linear strategy designed to minimize the Bayesian MSE between the estimated and true channel matrices. By incorporating prior statistical knowledge, specifically the channel spatial correlation matrix, it effectively balances the reliability of the received observation against the prior distribution of the channel. The formulation leverages the second-order statistics of the channel, expressed as:
    \begin{equation}
    \mathbf{H}_{\text{LMMSE}} = \mathbf{Y} (\mathbf{P}^H \mathbf{R}_\mathbf{H} \mathbf{P} + \sigma^2_{\text{pilot}} \mathbf{I})^{-1} \mathbf{P}^H \mathbf{R}_\mathbf{H}
    \end{equation}
    where $\mathbf{R}_\mathbf{H}$ denotes the channel covariance matrix representing the spatial correlation.
    
    \item \textbf{Frequency-selective Atomic Decomposition (fsAD)} \cite{bhaskar2013atomic}: 
    Representing the class of CS methods, the fsAD algorithm exploits the inherent sparsity of millimeter-wave channels in the continuous angular domain. It formulates channel estimation as an atomic norm minimization problem. While effective for sparse channels, its performance may degrade in rich scattering environments where the channel exhibits low sparsity or when the number of propagation paths increases significantly, challenging the fundamental CS assumptions.

    \item \textbf{Learned Denoising-based Approximate Message Passing (L-DAMP)} \cite{metzler2017learned}: 
    A model-based deep learning approach that unfolds the Approximate Message Passing (AMP) algorithm into a neural network architecture. By replacing the traditional shrinkage function in AMP with a learned denoiser (typically a DnCNN \cite{zhang2017beyond}), L-DAMP effectively learns the channel prior from training data to reconstruct sparse signals from noisy measurements.

    \item \textbf{Score-based} \cite{arvinte2022mimo}: 
    A generative approach based on the noise conditional score network. This method learns the gradient of the log-prior distribution (the score function) of the channel. During inference, it employs annealed Langevin dynamics to iteratively denoise a random initialization, producing a single sample from the posterior distribution $p(\mathbf{h}|\mathbf{y})$. This represents the performance of posterior sampling.

    \item \textbf{Lower Bound} \cite{arvinte2022mimo}: 
    An enhanced inference strategy utilizing the same pre-trained Score-based model as above \cite{arvinte2022mimo}. 
    To establish a performance lower bound, this strategy approximates the MMSE estimator using the same pre-trained Score-based model. Instead of relying on a single posterior sample, it computes the posterior mean $\mathbb{E}[\mathbf{h}|\mathbf{y}]$ by averaging 50 independent samples generated from parallel Langevin dynamics chains, thereby minimizing stochastic variance and providing the theoretically optimal error floor for the score-based approach.
\end{itemize}

\begin{figure}[t]\vspace{0mm}
	\begin{center}
		\centerline{\includegraphics[width=0.50\textwidth]{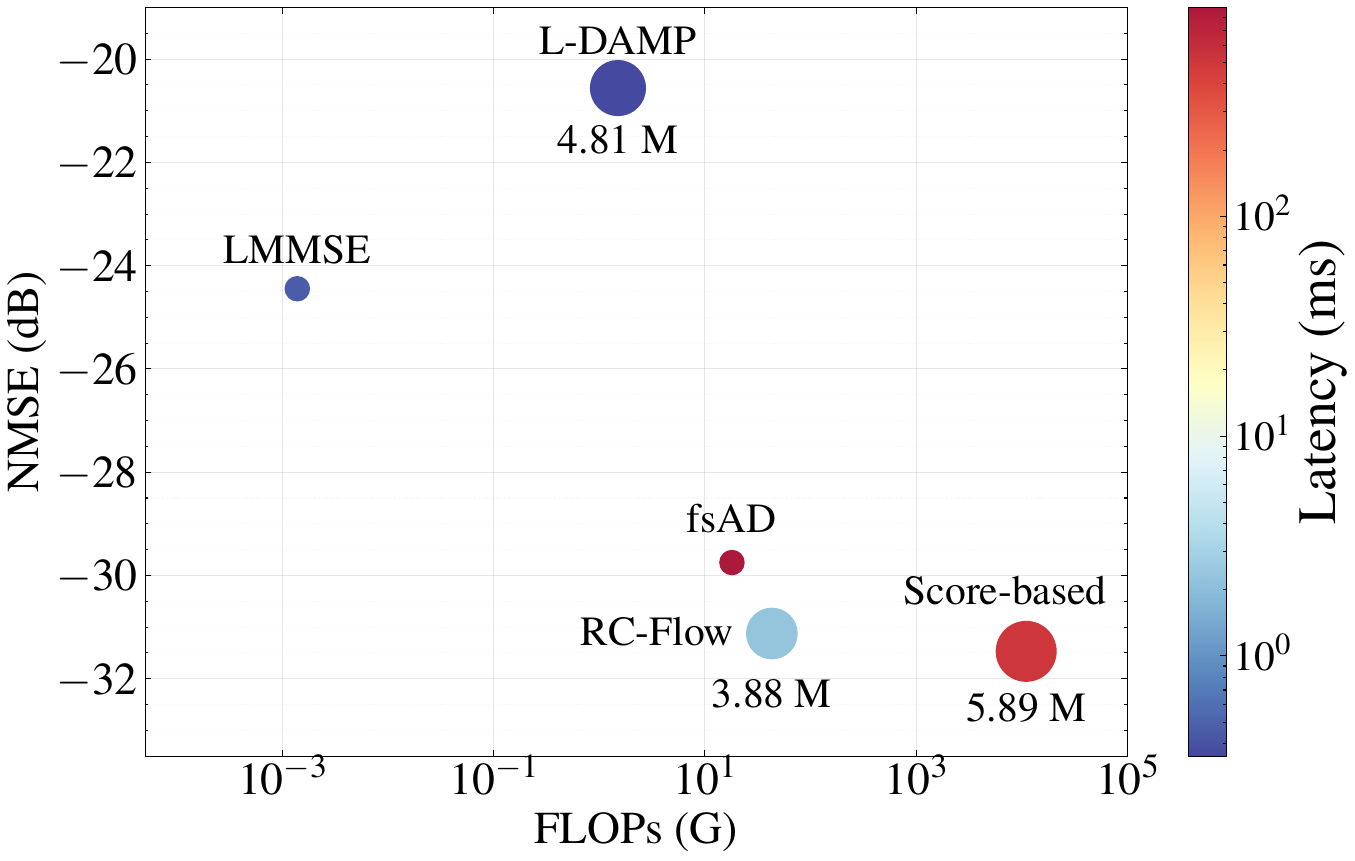}}  \vspace{-0mm}
		\captionsetup{font=footnotesize, name={Fig.}, labelsep=period} 
		\caption{\,Visualization of the trade-off between algorithm performance and computational cost. The evaluation is conducted with $\alpha=0.6$ at an SNR of 30 dB. The bubble size represents the scale of model parameters, and the color corresponds to the inference latency. The models were trained on CDL-C and tested on CDL-D.}
		\label{fig:Trade-off flops-nmse} \vspace{-8mm}
	\end{center}
        \vspace{2mm}
\end{figure}
\begin{figure}[t]\vspace{0mm}
	\begin{center}
		\centerline{\includegraphics[width=0.50\textwidth]{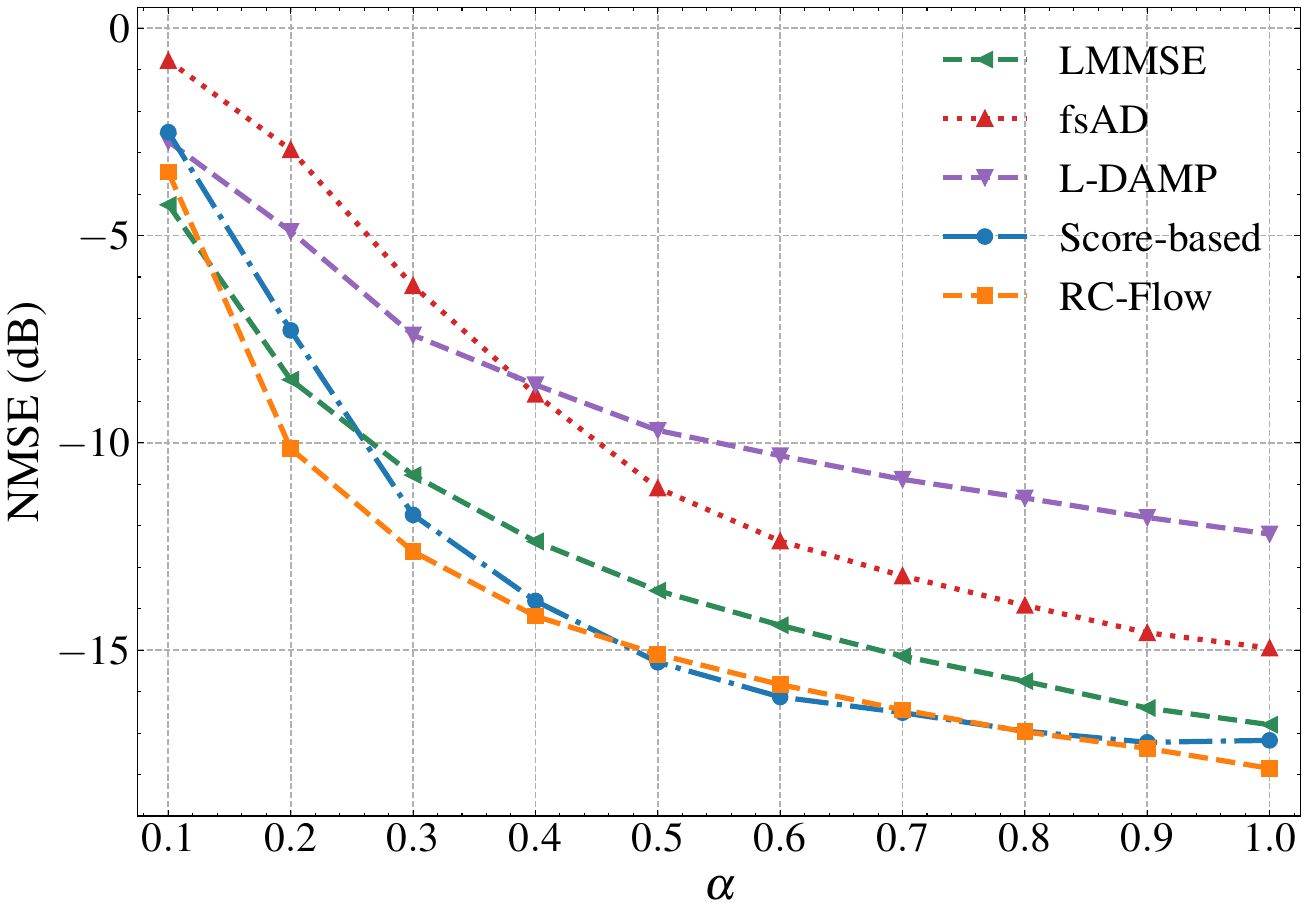}}  \vspace{-0mm}
		\captionsetup{font=footnotesize, name={Fig.}, labelsep=period} 
		\caption{\, Performance of different schemes trained on CDL-C channels and tested on CDL-D channels at an SNR of 10 dB with different pilot density.}
		\label{fig:different_pilot_density} \vspace{-8mm}
	\end{center}
\end{figure}
\begin{figure}[t]\vspace{0mm}
	\begin{center}
		\centerline{\includegraphics[width=0.50\textwidth]{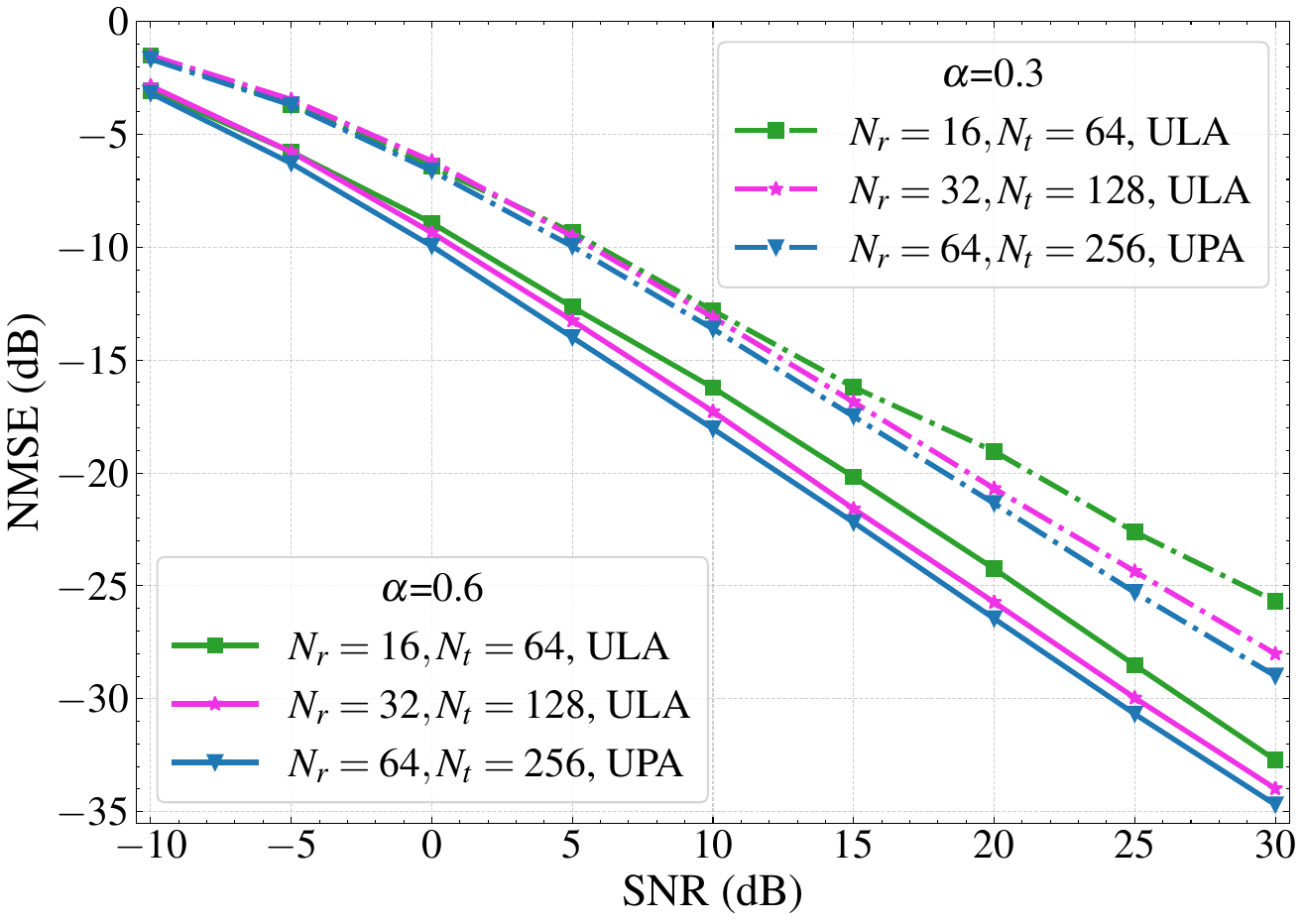}}  \vspace{-0mm}
		\captionsetup{font=footnotesize, name={Fig.}, labelsep=period} 
		\caption{\,  Performance of the proposed schemes trained and tested on CDL-C channels with different MIMO size.}
		\label{fig:MIMO_Antenna_Number} \vspace{-8mm}
	\end{center}
\end{figure}
\begin{figure}[t]\vspace{0mm}
	\begin{center}
		\centerline{\includegraphics[width=0.50\textwidth]{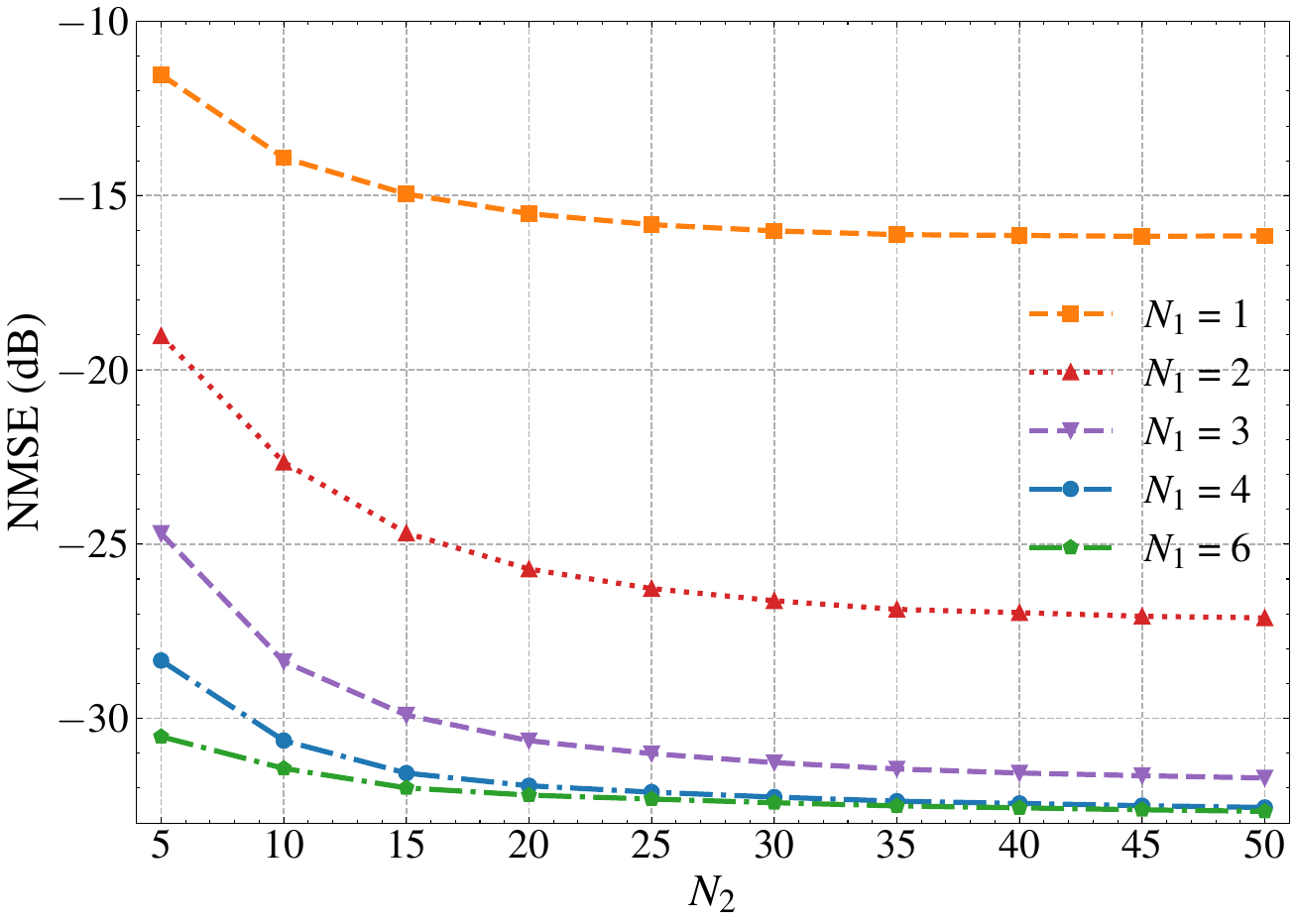}}  \vspace{-0mm}
		\captionsetup{font=footnotesize, name={Fig.}, labelsep=period} 
		\caption{\,  Performance of the proposed scheme trained and tested on CDL-C channels at an SNR of 30 dB with different number of inner ($N_2$) and outer loop iterations ($N_1$), when $\alpha=0.6$.}
		\label{fig:different_nfe1_nfe2} \vspace{-8mm}
	\end{center}
\end{figure}
\subsection{CSI Estimation Evaluation}
To provide an intuitive understanding of the proposed framework, Fig. \ref{fig:evolution process} illustrates the temporal evolution of the channel estimate of a single outer iteration. The three rows depict the outputs of the core operators $\mathcal{D}$, $\mathcal{P}$, and $\mathcal{A}$, respectively. Starting from an anchor source at $t=1$, the prior estimate $\tilde{\mathbf{H}}$ progressively uncovers the coarse latent structure of the channel. Subsequently, the proximal projection $\mathbf{H}_{\text{proj}}$ enforces data consistency by aligning the generative trajectory with the pilot observations. Finally, the rectification balances both components to provide a refined anchor for the subsequent iteration, effectively steering the trajectory source toward the ground truth. 
\par
Fig. \ref{fig:cross scenarios} illustrates the robustness of the proposed RC-Flow in CDL-\{A,B,C,D\} environments, employing a $16 \times 64$ MIMO configuration with $\alpha = 0.6$. For comparison, a Mixed RC-Flow model is trained on a composite dataset of CDL-\{A,B,C,D\} comprising 40,000 channel realizations. Furthermore, the performance of in-distribution models trained specifically on each corresponding CDL model is included to characterize the estimation accuracy within matched environments. As for performance measurement, we use normalized mean squared error (NMSE), defined as:
\begin{equation}
    \text{NMSE}= 10\log_{10}\frac{{\|\Hmat_{\text{est}} - \Hmat\|}^2_F}{{\|\Hmat\|}^2_F}
\end{equation}
\par
In in-distribution scenarios, the proposed RC-Flow outperforms the score-based baseline and closely approaches the lower bound. While RC-Flow exhibits nearly identical precision to the score-based method in high-SNR regimes, it achieves a substantial performance gain of approximately 2 dB in the low-SNR region ($-10$ dB to $-5$ dB). This superior noise resilience underscores the potential of RC-Flow as a high-fidelity estimation framework, particularly for power-limited environments. Although the LMMSE estimator is competitive across all SNR regimes, it consistently lags behind the proposed RC-Flow by a margin exceeding 2 dB.
\par
The generalization capabilities of the generative priors are further analyzed through the Mixed model and out-of-distribution (OOD) evaluations. The Mixed model maintains robustness at high SNR but suffers pronounced NMSE degradation at low SNR (excluding CDL-B) relative to its in-distribution counterparts. This stems from the structural ambiguity of a generalized prior, which lacks the specialization necessary to resolve profile-specific features when measurement guidance is weak. In OOD scenarios, the RC-Flow trained on CDL-C exhibits an NMSE floor of approximately $-11$ dB when evaluated on CDL-A and CDL-B. In contrast, negligible saturation is observed on CDL-D channels, where the NMSE decreases consistently with SNR, maintaining a 5 dB gap relative to the in-distribution benchmark. This discrepancy is attributed to channel complexity: the rich scattering in CDL-A and B increases estimation difficulty, whereas the dominant LOS components in CDL-D facilitate simpler recovery. 
\par
\begin{figure}[t]\vspace{0mm}
	\begin{center}
		\centerline{\includegraphics[width=0.50\textwidth]{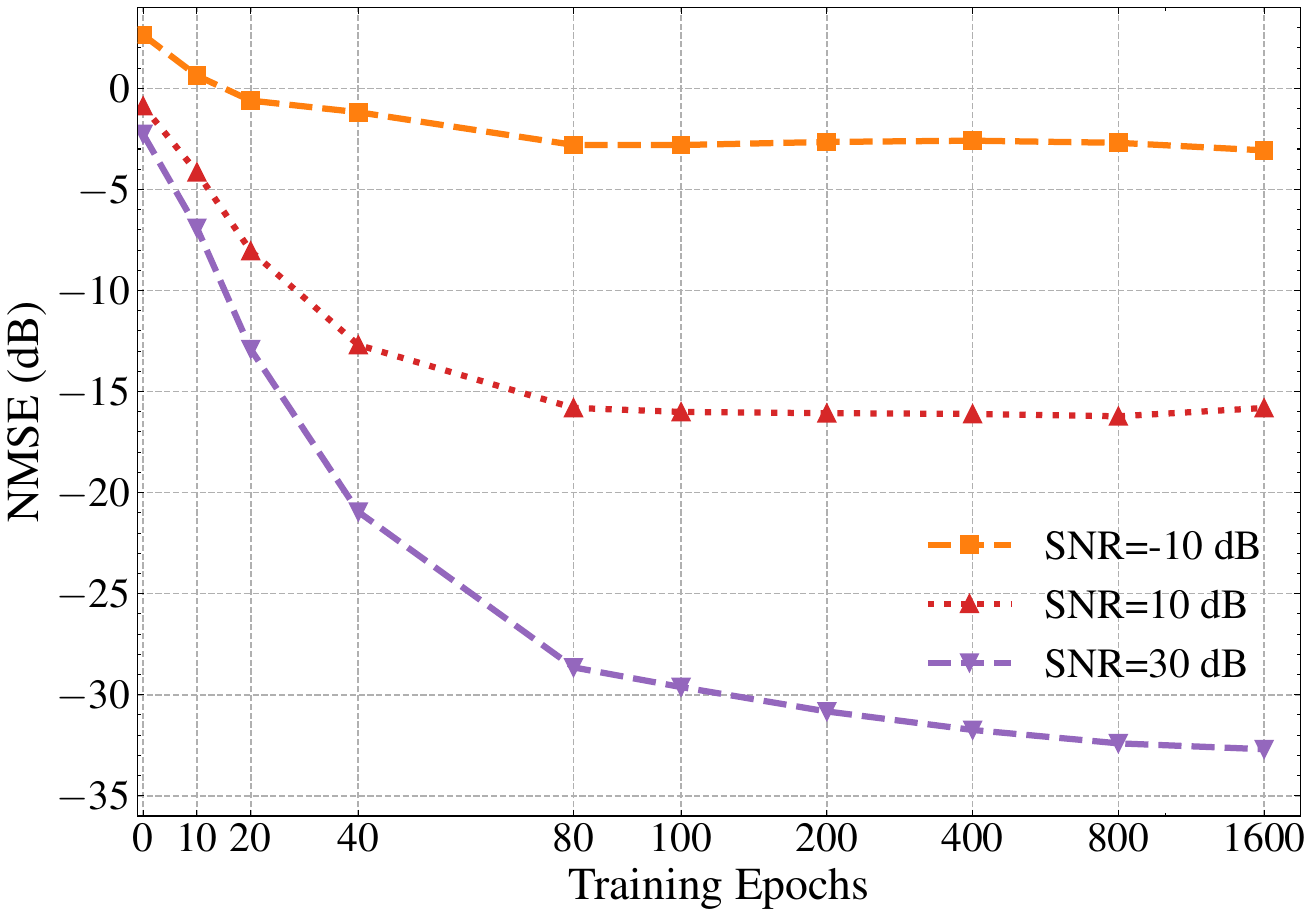}}  \vspace{-0mm}
		\captionsetup{font=footnotesize, name={Fig.}, labelsep=period} 
		\caption{\, Performance of the proposed scheme trained and tested on CDL-C channels with different training epochs, when $\alpha=0.6$.}
		\label{fig:different_training_epochs} \vspace{-8mm}
	\end{center}
\end{figure}

\begin{figure}[t]\vspace{0mm}
	\begin{center}
		\centerline{\includegraphics[width=0.50\textwidth]{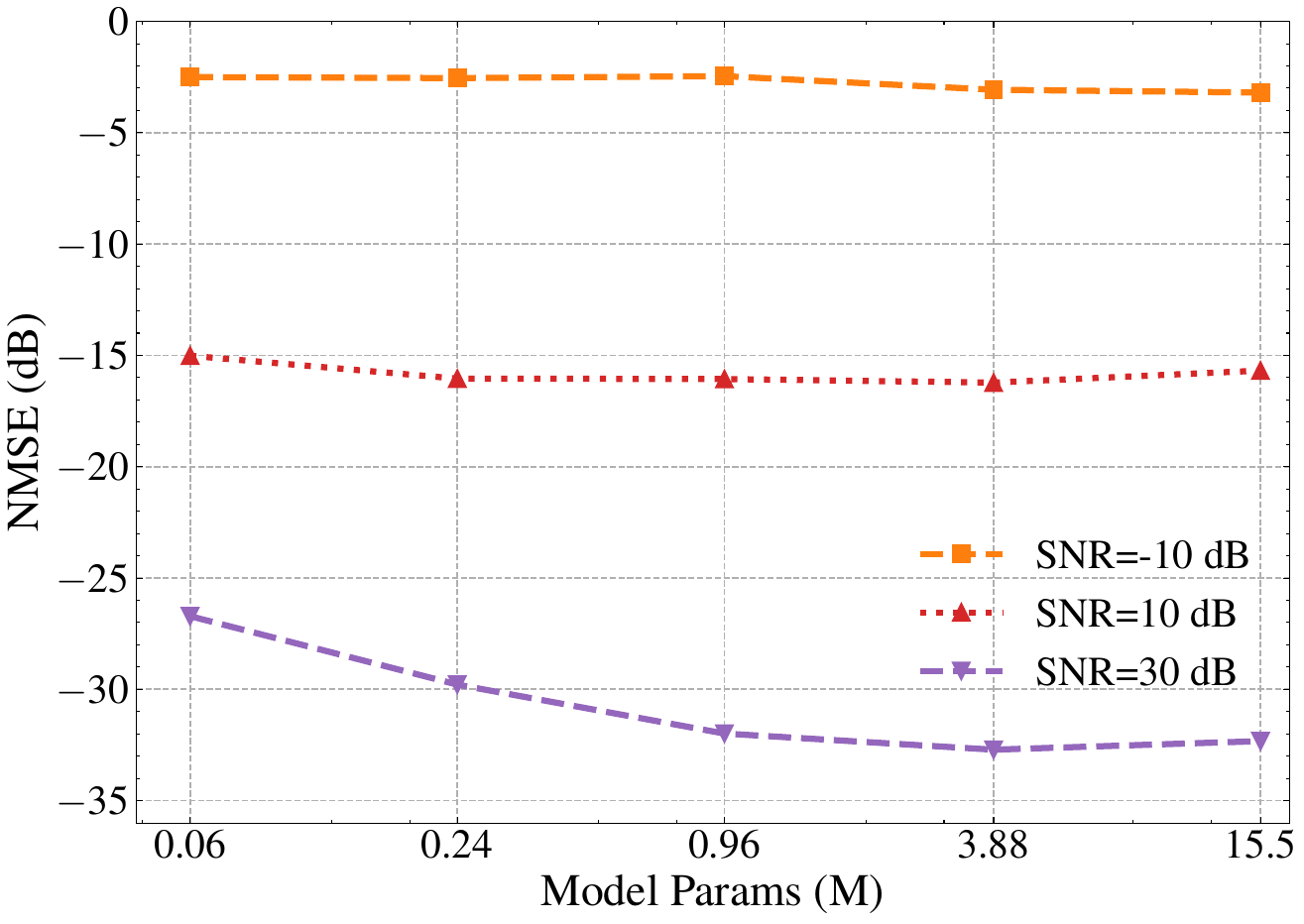}}  \vspace{-0mm}
		\captionsetup{font=footnotesize, name={Fig.}, labelsep=period} 
		\caption{\, Performance of the proposed scheme trained and tested on CDL-C channels with different model size.}
		\label{fig:different_model_size} \vspace{-8mm}
	\end{center}
\end{figure}
\begin{figure}[t]\vspace{0mm}
	\begin{center}
		\centerline{\includegraphics[width=0.50\textwidth]{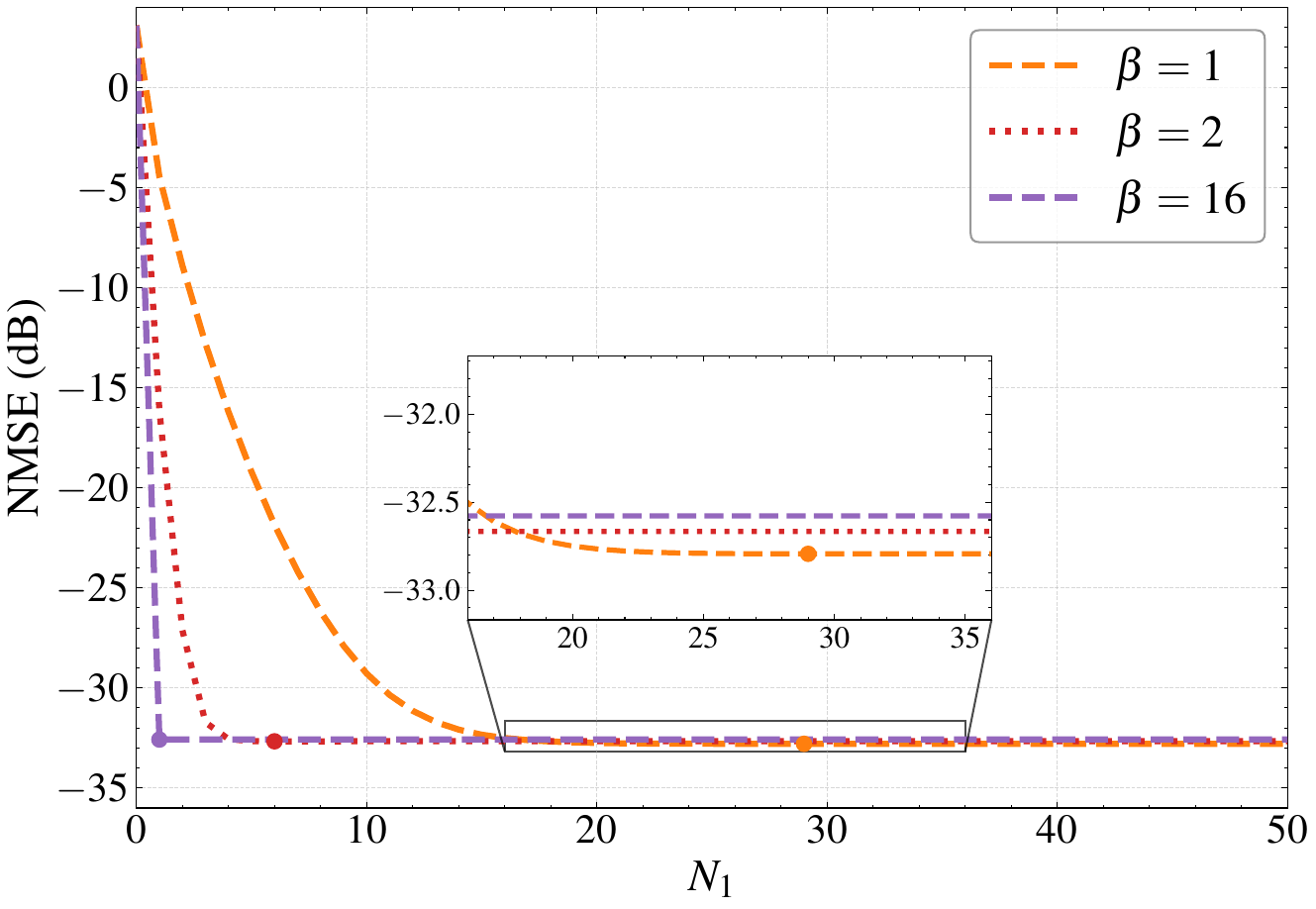}}  \vspace{-0mm}
		\captionsetup{font=footnotesize, name={Fig.}, labelsep=period} 
		\caption{\, The iterative process of the algorithm's dynamics with different $\beta$, where $\alpha=0.6$ and SNR = 30 dB.}
		\label{fig:Beta_N1_30dB} \vspace{-6mm}
	\end{center}
\end{figure}
\begin{figure}[t]\vspace{0mm}
	\begin{center}
		\centerline{\includegraphics[width=0.50\textwidth]{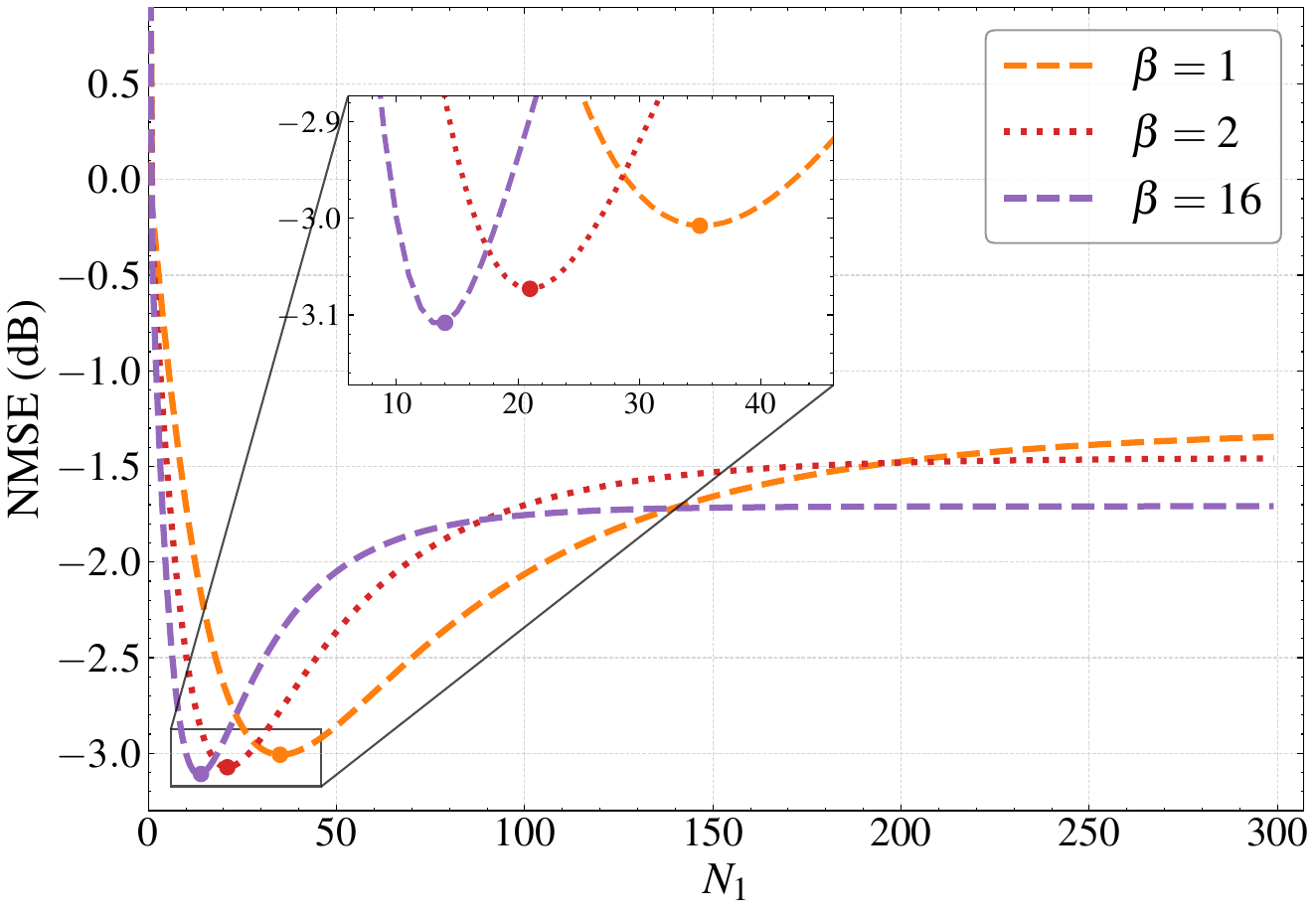}}  \vspace{-0mm}
		\captionsetup{font=footnotesize, name={Fig.}, labelsep=period} 
		\caption{\, The iterative process of the algorithm's dynamics with different $\beta$, where $\alpha=0.6$ and SNR = -10 dB.}
		\label{fig:Beta_N1_-10dB} \vspace{-8mm}
	\end{center}
\end{figure}
Finally, conventional baselines such as L-DAMP and fsAD provide relatively reliable and consistent performance across all scenarios, particularly for CDL-C and D. However, a critical limitation of fsAD is its strict dependence on precise antenna array geometry, which is often unavailable in practical deployments. Similarly, while Compressed Sensing (CS) based approaches align with the structural sparsity of CDL models, their reliance on idealistic sparsity assumptions often proves infeasible in complex real-world environments.

\subsection{Complexity Analysis}
Fig. \ref{fig:Trade-off flops-nmse} illustrates the comprehensive trade-off among estimation accuracy (NMSE), the number of floating-point operations (FLOPs), and inference latency for the evaluated algorithms. Notably, FLOPs and latency are reported as the averaged time per sample, measured with a batch size of 100 on an NVIDIA A100 GPU. It is evident that the proposed RC-Flow achieves a superior balance between reconstruction fidelity and computational overhead.  While the LMMSE and L-DAMP estimators offer extremely low latencies at the $10^{-1}$~ms magnitude, they fail to provide competitive NMSE results. Conversely, the Score-based method and fsAD exhibit excessive latencies on the order of $10^3$~ms, which exceeds the typical channel coherence time and renders them impractical for real-time deployments. Specifically, the Score-based approach involves a prohibitive FLOPs count at the $10^4$~G level, necessitating substantial hardware resources. In contrast, the proposed RC-Flow delivers near-optimal NMSE with an inference latency at the $10^0$~ms magnitude.
\begin{figure*}[t]\vspace{0mm}
	\begin{center}
		\centerline{\includegraphics[width=1\textwidth]{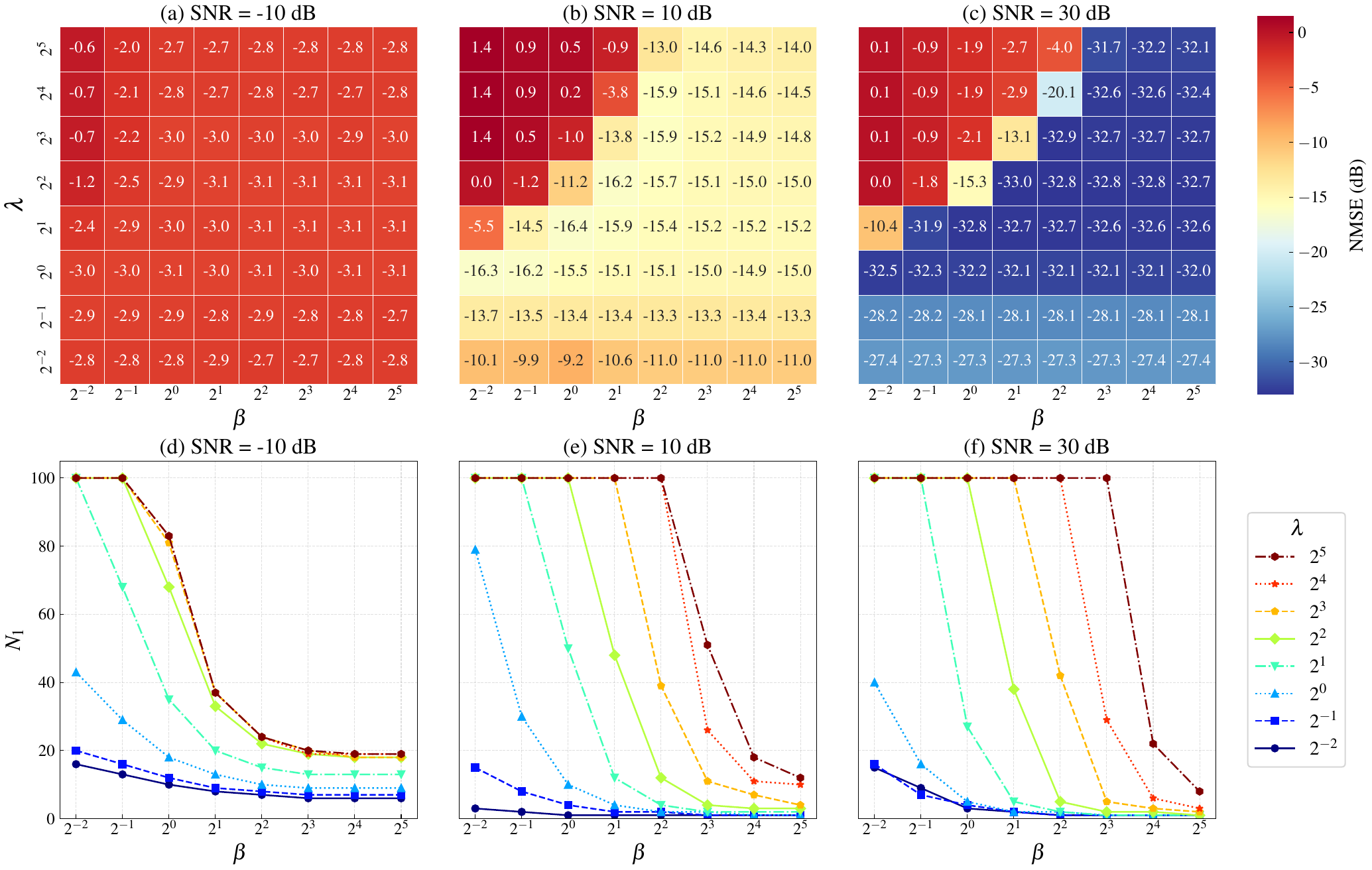}}  \vspace{-0mm}
		\captionsetup{font=footnotesize, name={Fig.}, labelsep=period} 
		\caption{\, Sensitivity analysis of hyperparameters $\lambda$ (Flow Schedule) and $\beta$ (Anchor Schedule) across different SNR regimes. The heatmap displays the NMSE (dB), and the line graph displays the least number of outer iterations required to reach the sweet spot as show in Fig. \ref{fig:Beta_N1_-10dB}.}
		\label{fig:hyper schedule} \vspace{-8mm}
	\end{center}
\end{figure*}
\subsection{System Scaling Analysis}
\paragraph{Performance Scaling with Pilot Overhead}
Fig. \ref{fig:different_pilot_density} illustrates the NMSE performance as a function of the pilot density $\alpha \in \{0.1, 0.2, \dots, 1.0\}$. All evaluated methods demonstrate consistent NMSE reduction as $\alpha$ increases, with the proposed RC-Flow achieving a significant performance advantage. Specifically, the LMMSE estimator demonstrates robust performance across all pilot densities, showing particular resilience in the extremely low pilot density regime ($\alpha=0.1$). However, it still lags significantly behind both the RC-Flow and Score-based methods in the medium-to-high regimes. In terms of L-DAMP, while it provides competitive results at very low pilot densities, its performance scales poorly with $\alpha$, resulting in marginal gains in the medium-to-high density regimes. Conversely, the fsAD method is more sensitive to pilot density, showing poor performance at low $\alpha$ but improving faster as $\alpha$ increases, though a substantial gap persists relative to the generative solvers. Notably, while RC-Flow and the Score-based method achieve similar precision for $\alpha > 0.4$, RC-Flow demonstrates a noticeable advantage in the low-pilot regime, outperforming the Score-based baseline by $3$ dB at $\alpha=0.2$. These findings underscore the remarkable resilience of RC-Flow to limited pilot overhead.

\paragraph{Scalability Across Massive MIMO Dimensions}
Fig. \ref{fig:MIMO_Antenna_Number} illustrates the NMSE performance of RC-Flow across three distinct MIMO configurations with $\alpha \in \{0.3, 0.6\}$, while maintaining antenna spacing at half-wavelength. For each antenna dimension, we train a separate model individually. As observed, the estimation accuracy improves consistently as the MIMO dimensions scale up. This performance gain is primarily attributed to the superior angular resolution and the more pronounced structural sparsity inherent in larger antenna arrays. These characteristics allow the generative prior to provide more precise guidance within high-dimensional channel manifolds.

\subsection{Hyperparameter Analysis}\label{sec: Hyperparameter Analysis}
\paragraph{Impact of Inner and Outer Iteration Numbers}
Fig. \ref{fig:different_nfe1_nfe2} illustrates the NMSE performance across varying outer and inner iteration numbers ($N_1$ and $N_2$) at an SNR of 30 dB with $\alpha=0.6$. For a fixed $N_1$, the NMSE decreases significantly as $N_2$ increases, typically plateauing beyond $N_2 = 25$. Notably, the marginal gains from increasing $N_2$ are more pronounced at lower $N_1$ values, whereas this influence diminishes as $N_1$ becomes larger. For a fixed $N_2$, increasing $N_1$ leads to significant improvements when $N_1$ is small, with gains becoming marginal once $N_1$ exceeds approximately 4 steps. These observations validate the efficacy of the proposed recursive anchor refinement mechanism. However, it is evident that both the outer processes and the inner fine-grained steps exhibit marginal effect, as the performance improvements gradually plateauing.

\paragraph{Impact of Training Epochs}\label{par:Impact of Training Epochs}
Fig. \ref{fig:different_training_epochs} illustrates the NMSE performance of the proposed method as a function of training epochs across three SNR regimes ($\alpha=0.6$). Estimation accuracy consistently improves across all SNR levels as training progresses. Notably, the performance gains obtained through extended training are significantly more pronounced in high-SNR scenarios compared to low-SNR regimes. This phenomenon stems from the varying dependence on the learned generative prior across different noise levels. In high-SNR environments, where measurements are relatively reliable, a well-trained network is significant to resolve fine-grained manifold structures for high-precision reconstruction. Conversely, at low SNR, performance is primarily limited by dominant measurement noise, leading to the earlier saturation of gains provided by the generative prior.

\paragraph{Impact of Model Scale}
Fig. \ref{fig:different_model_size} evaluates the NMSE of RC-Flow across five model scales under three noise levels, utilizing the CDL-C dataset with $\alpha = 0.6$. Notably, in low-to-medium SNR regimes, the model scale has a slight impact on the estimation accuracy. However, at high SNR, the performance improves significantly with model scale and reaches saturation at 3.88M parameters. As detailed in the preceding paragraph, this phenomenon stems from the varying dependence on the channel prior across different noise levels. The model scale reflects the ability to characterize the complex channel prior. While a larger model can better capture the intricate structure of channel priors, the representational capacity of the model becomes saturated beyond a certain scale under a fixed volume of training data. Consequently, further expansion of parameters provides diminishing gains and may even lead to performance degradation due to potential overfitting.

\subsection{Analysis of Algorithm Dynamics}
\paragraph{Convergence Dynamics and $\beta$ Tuning}
Figs. \ref{fig:Beta_N1_30dB} and \ref{fig:Beta_N1_-10dB} illustrate the convergence dynamics of the proposed algorithm with $\alpha=0.6$ and $\lambda=2$ under high and low  SNR regimes, respectively. 
\par
In the high SNR case, the NMSE exhibits a monotonic decreasing trend with respect to $N_1$, eventually stabilizing at a convergence floor. Increasing $\beta$ significantly accelerates convergence. For instance, at $\beta=16$, the algorithm achieves near-optimal performance within a single iteration. However, this acceleration incurs a marginal performance loss of approximately 0.1 dB compared to lower $\beta$ settings. Conversely, at low SNR case, the dynamics follow a non-monotonic pattern: the NMSE initially descends to a minimum `sweet spot', followed by a gradual ascent toward an asymptotic plateau. Notably, as $\beta$ increases, the minimum point shifts toward the bottom-left, indicating simultaneous improvements in inference accuracy and speed. Furthermore, larger $\beta$ values accelerate the transition to the plateau phase while achieving a lower converged NMSE.
\par
A comparative analysis reveals that while the algorithm shares a common initialization dynamics, its asymptotic behavior is strictly governed by the signal quality. In the initial phase, both regimes exhibit a rapid descent in NMSE dominated by noise suppression, where the model effectively steers the estimate toward the inherently smooth posterior mean. However, subsequent behaviors diverge due to the perception-distortion trade-off \cite{blau2018perception}. At $-10$ dB, the NMSE rebounds after reaching a minimum `sweet spot' as the model proceeds to synthesize high-frequency structural details to satisfy manifold constraints. However, due to the ill-posed nature of the problem, these generated stochastic features suffer from spatial misalignment, effectively prioritizing perceptual consistency at the expense of pixel-wise fidelity. Conversely, as the SNR increases to 30 dB, reliable measurements ensure that the synthesized fine-grained textures remain faithful to the true channel state. Consequently, the convergence region aligns with the sweet spot, avoiding the distortion penalty associated with hallucinatory features and resulting in the strictly monotonic trajectory where no early stopping is required.

\paragraph{Sensitivity analysis of $\lambda$-$\beta$ configurations}
\label{para:beta_and_lambda}
Fig. \ref{fig:hyper schedule} illustrates a comprehensive sensitivity analysis of RC-Flow regarding hyperparameters $\lambda$ and $\beta$ at $\alpha = 0.6$. The heatmaps (a) -- (c) represent the NMSE achieved at the dynamics sweet spot, while the line graphs (d) -- (f) depict the outer iteration numbers required to reach this point within the maximum limit of $N_1 = 100$.
\par
Analysis of the heatmaps and line graphs reveal a distinct correspondence between reconstruction fidelity and convergence efficiency. Regarding NMSE performance, optimal results are consistently achieved in the moderate $\lambda$ regime, while larger $\lambda$ values significantly decelerate the convergence process. This occurs because an excessively high $\lambda$ causes the time variable $t$ to decrease too rapidly, diminishing the flow-predicted velocity and resulting in smaller per-step updates. Furthermore, while increasing $\beta$ generally accelerates convergence, enabling RC-Flow to reach the sweet spot in a single $N_1$ step under medium-to-high SNR scenarios, it induces a slight performance degradation of approximately 1 dB in medium SNR regimes. 
In the low-SNR regime, at least $N_1 = 6$ iterations are required under an optimal $\lambda$-$\beta$ configurations. This suggests that higher $\beta$ values prioritize convergence speed, potentially at the expense of absolute precision in intermediate noise conditions.
\par
Finally, we highlight that certain $\lambda$-$\beta$ configurations, particularly in the upper-left regions of the heatmaps. The sharp performance degradation in these regions does not necessarily imply that such $\lambda$-$\beta$ combinations result in poor potential performance, but rather that the trajectory has not yet reached the sweet spot within the maximum $N_1$ limitation. This underscores the significance of balanced scheduling to optimize both reconstruction fidelity and computational efficiency.

\section{Conclusion}\label{sec:conclusion}
In this paper, we presented RC-Flow, a unified generative framework for high-dimensional MIMO channel estimation. By reformulating the inference process as a closed-loop fixed-point iteration, our method addresses the fundamental limitations of open-loop generative samplers in ill-posed inverse problems. The core novelty lies in the serial restart mechanism combined with anchored trajectory rectification, which effectively suppresses error propagation, even under extreme noise conditions. This design allows RC-Flow to achieve SOTA accuracy across a wide range of SNRs and pilot density, reducing inference latency by orders of magnitude compared to Score-based approaches. We believe this framework establishes a new paradigm for solving constrained inverse problems, paving the way for low-latency, high-fidelity generative inference in 6G communications.

\begin{appendices}
\vspace{2mm}

\section{Proof of Theorem 1}
\label{app:proof_existence}

To invoke Brouwer's Fixed-Point Theorem, we consider the composite operator of the entire inner loop, denoted as $\mathcal{T}: \mathbb{C}^{N_r \times N_t} \to \mathbb{C}^{N_r \times N_t}$, which maps the input state $\mathbf{H}^{(k, 0)}$ to the output $\mathbf{H}^{(k+1, 0)}$ after $N_2$ recursive steps. We must identify a compact convex ball $\mathcal{K}_{R^\star}$ such that $\mathcal{T}(\mathcal{K}_{R^\star}) \subseteq \mathcal{K}_{R^\star}$.

\subsection{Continuity}
The neural network $\mathcal{D}$ (with continuous activation functions), the linear matrix operations in the projection $\mathcal{P}$, and the convex combination with the anchor are all continuous mappings. Since the matrix inverse $(\mathbf{M} + w^{-1}\mathbf{I})^{-1}$ exists and is continuous for $w^{-1} > 0$, the composite operator $\mathcal{T}$ is continuous on $\mathbb{C}^{N_r \times N_t}$.

\subsection{Self-Mapping on a Compact Set}
We establish a global upper bound on the projected states, independent of the iteration index. Based on the Bounded Denoiser Hypothesis (Assumption 1), the output is bounded by $\|\mathcal{D}(\mathbf{H})\|_F \le B_{\text{flow}}$. Consider the projection step at any inner index $i$:
\begin{equation}
    \mathbf{H}_{\text{proj}}^{(i)} = (\mathbf{R} + w^{-1} \mathcal{D}(\mathbf{H}^{(k, i)})) (\mathbf{M} + w^{-1} \mathbf{I})^{-1}.
\end{equation}
Applying the norm inequality and the spectral bound $\| w^{-1} (\mathbf{M} + w^{-1} \mathbf{I})^{-1} \|_2 \le 1$:
\begin{align}
    \|\mathbf{H}_{\text{proj}}^{(i)}\|_F &\le \|\mathbf{R}(\mathbf{M} + w^{-1} \mathbf{I})^{-1}\|_F + \|\mathcal{D}(\cdot)\|_F \cdot 1 \nonumber \\
    &\le C_{\text{obs}} + B_{\text{flow}} \triangleq R_{\text{limit}}.
\end{align}
Crucially, $R_{\text{limit}}$ is a constant determined solely by the observations and the physical prior manifold. It does not depend on the norm of the intermediate state $\mathbf{H}^{(k, i)}$ or the step index $i$.
\par

Define the invariant ball $\mathcal{K}_{R_{\text{limit}}} \triangleq \{ \mathbf{H} \mid \|\mathbf{H}\|_F \le R_{\text{limit}} \}$, which is inherently a convex set.
Assume the initialization satisfies $\mathbf{H}^{(k, 0)} \in \mathcal{K}_{R_{\text{limit}}}$, implying the anchor $\boldsymbol{\epsilon} \in \mathcal{K}_{R_{\text{limit}}}$.
Since we have established that any projected state is bounded by $R_{\text{limit}}$, we also have $\mathbf{H}_{\text{proj}}^{(i)} \in \mathcal{K}_{R_{\text{limit}}}$.
The update rule $\mathbf{H}^{(k, i+1)} = t'_i \boldsymbol{\epsilon} + (1-t'_i) \mathbf{H}_{\text{proj}}^{(i)}$ constitutes a convex combination, where $t'_i = (1 - (i+1)/N_2)^{\beta}$. By the definition of convexity, the resulting state $\mathbf{H}^{(k, i+1)}$ must remain within $\mathcal{K}_{R_{\text{limit}}}$ for any $t'_i \in [0, 1)$.
By induction, the final output satisfies $\mathbf{H}^{(k, N_2)} \in \mathcal{K}_{R_{\text{limit}}}$.
\par
Therefore, we have proved that if the input $\mathbf{H}^{(k, 0)} \in \mathcal{K}_{R^\star}$, then the output of the composite inner loop $\mathcal{T}(\mathbf{H}^{(k, 0)}) \in \mathcal{K}_{R^\star}$. Thus, $\mathcal{T}$ is a self-mapping on the compact convex set $\mathcal{K}_{R^\star}$. By Brouwer's Fixed-Point Theorem, there exists at least one fixed point $\mathbf{H}^\star \in \mathcal{K}_{R^\star}$. \hfill

\begin{figure*}[t]
    \centering
    \includegraphics[width=1\linewidth]{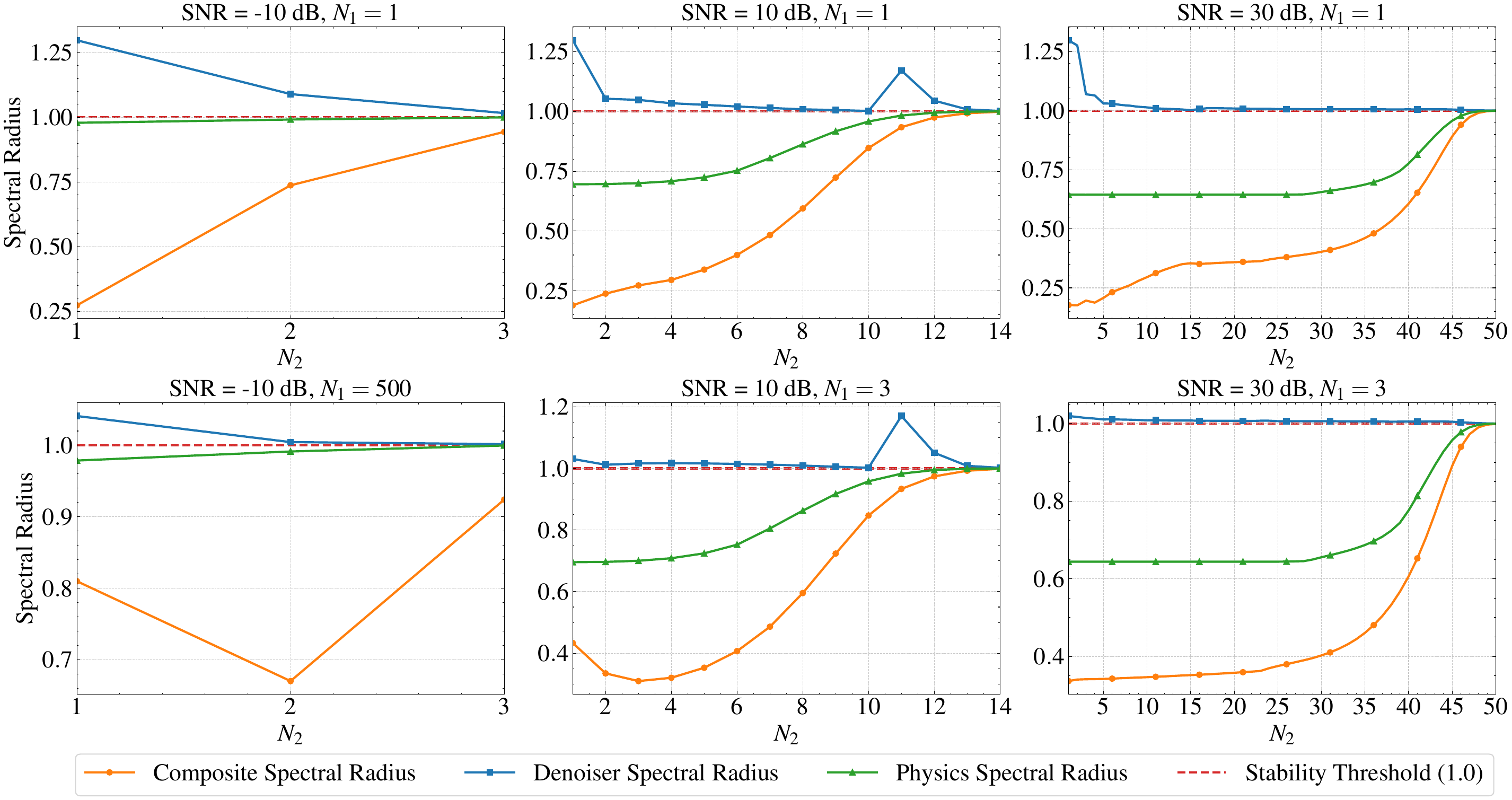}
    \captionsetup{font=footnotesize, name={Fig.}, labelsep=period}
    \caption{\, The evolution of spectral radius for the composite operator $\mathcal{P} \circ \mathcal{D}$, the denoiser $\mathcal{D}$, and the physics-based operator $\mathcal{P}$ as a function of inner iteration numbers ($N_2$) under various SNR and outer iteration number.}
    \label{fig:spectral_radius}
\end{figure*}

\vspace{2mm}
\section{Numerical Validations of Assumption 2}
\label{app:proof_spectral_radius}
Fig. \ref{fig:spectral_radius} illustrates the evolution of the spectral radius as a function of the inner iteration numbers $N_2$ under various outer iteration settings $N_1$. Specifically, the first and second rows correspond to the initial and fully converged iterations, respectively. The simulation parameters are configured as $\beta=16, \lambda=2, \alpha=0.6$ within a $16 \times 64$ MIMO system across three distinct SNR regimes. A stability threshold of $1.0$ is established as the convergence criterion, where the spectral radius consistently below this boundary signifies algorithmic convergence. While only the initial and fully converged stages are presented, the spectral radius evolution across the intermediate progression of $N_1$ iterations exhibits qualitatively similar behavior.
\par
The spectral radius of $\mathbf{J}_{\mathcal{D},i}$ consistently exceeds unity, yet exhibits a clear asymptotic trend toward stability threshold as the inner iterations progress. This behavior is analytically tied to the generative refinement term $t\cdot\mathbf{V}$ in the update formula; as the temporal variable $t$ monotonically decreases toward zero across the flow trajectory, the effective influence of the generative prior is gradually attenuated. Consequently, the operator's expansive tendency is suppressed, causing its spectral radius to shrink toward the stability boundary. In contrast, $\mathbf{J}_{\mathcal{P},i}$ maintains a spectral radius strictly below unity. This stabilizing behavior can be analytically explained by the structure of the proximal solution. Specifically, the eigenvalues of the linear part of $\mathcal{P}$, denoted by $\mu_i$, satisfy the relationship $\mu_i = 1/(w\lambda_i + 1)$, where $\lambda_i$ represent the eigenvalues of matrix $\mathbf{M}$. Consequently, the spectral radius is expressed as 
\begin{equation}
    \rho(\mathbf{J}_{\mathcal{P},i}) = \max\limits_i |\mu_i| = \max\limits_i |1/(w\lambda_i + 1)|.
\end{equation}
As $t$ evolves from $1$ toward $0$, the variance annealing parameter $w$ monotonically decreases, causing the denominator to approach unity. Consequently, the spectral radius increases from a highly contractive state and asymptotically approaches the stability boundary. 
\par
Intriguingly, $\rho(\mathbf{T}_i)$ is not only maintained below the stability boundary but is also consistently lower than $\rho(\mathbf{J}_{\mathcal{P},i})$ alone. This phenomenon suggests a synergy rooted in the complementary compression between the learned generative prior and the physical manifold. The flow dynamics provide update directions aligned with the measurement likelihood, effectively reinforcing the proximal contraction to enhance the stability margin and accelerate convergence. 
\par
With increasing inner loop iterations $N_2$, all spectral radii asymptotically converge toward unity, signifying a quasi-equilibrium state where generative refinement and data consistency reach a dynamic balance. Crucially, the convergence of $\rho(\mathbf{T}_i)$ from below the threshold ensures that the algorithm maximizes informational utility while strictly adhering to the stability and contractivity requirements throughout the iterative evolution.

\vspace{2mm}
\section{Proof of Theorem 2}
\label{app:proof_stability}

In this appendix, we provide a rigorous derivation of the global asymptotic stability in a generalized metric space for the RC-Flow algorithm. The objective is to bound the spectral radius of the total Jacobian matrix $\mathbf{J}_{\mathcal{T}} \in \mathbb{C}^{N \times N}$ (where $N = N_t N_r$), which governs the sensitivity of the output of the $k$-th outer iteration, $\mathbf{H}^{(k+1, 0)}$, with respect to its input, $\mathbf{H}^{(k, 0)}$.

\subsection{Stability Criterion via Induced Metric Space}
We invoke Assumption 2, which postulates that the composite inner-loop operator $\mathbf{T}_i \triangleq \mathbf{J}_{\mathcal{P}, i}\mathbf{J}_{\mathcal{D}, i}$ satisfies the spectral contraction condition $\rho(\mathbf{T}_i) \le \gamma < 1$. According to the Householder theorem, for any matrix $\mathbf{A}$ with $\rho(\mathbf{A}) < 1$ and any given $\eta > 0$, there exists an induced vector norm $\|\cdot\|_*$ such that:
\begin{equation}
    \|\mathbf{A}\|_* \le \rho(\mathbf{A}) + \eta.
\end{equation}
However, since the operator $\mathbf{T}_i$ is time-varying, strictly establishing stability requires a common metric. Given the smooth evolution of the parameters, we analytically extend Assumption 2 to postulate the existence of a common induced norm $\| \cdot \|$ for the sequence $\{\mathbf{T}_i\}^{N_2-1}_{i=0}$, such that:
\begin{equation} \label{eq:induced_contraction}
    \|\mathbf{T}_i\|_* \le \gamma' < 1, \quad \forall i.
\end{equation}
Consequently, our proof strategy is to establish the contraction of the total Jacobian $\mathbf{J}_{\mathcal{T}}$ within this generalized metric space $(\mathbb{C}^N, \|\cdot\|_*)$.

\subsection{Differential Dynamics of the Inner Loop}

Consider the $k$-th outer loop iteration. Let the input state be $\mathbf{H}^{(k, 0)}$. According to Algorithm 1, the anchor point is reset to the previous estimate, i.e., $\boldsymbol{\epsilon} = \mathbf{H}^{(k, 0)}$. The inner loop generates a sequence of states $\{\mathbf{H}^{(k, i)}\}_{i=0}^{N_2}$, where the update rule at the $i$-th inner step is given by:
\begin{equation}
    \mathbf{H}^{(k, i+1)} = t'_i \mathbf{H}^{(k, 0)} + (1-t'_i) \mathcal{P}(\mathcal{D}(\mathbf{H}^{(k, i)})),
\end{equation}
where $\mathcal{P}(\cdot)$ and $\mathcal{D}(\cdot)$ denote the physics-aware projection and flow-matching denoiser operators, respectively, and $t' \in [0, 1)$ is the time-varying anchor coefficient.
\par
We define the cumulative sensitivity matrix at inner step $i$ as $\mathbf{J}^{(i)} \triangleq \frac{\partial \rm{vec}(\mathbf{H}^{(k, i)})}{\partial \rm{vec}(\mathbf{H}^{(k, 0)})}$. By applying the chain rule to the recurrence relation, we obtain:
\begin{equation}
    \frac{\partial \rm{vec}(\mathbf{H}^{(k, i+1)})}{\partial \rm{vec}(\mathbf{H}^{(k, 0)})} = t' \mathbf{I} + (1-t') \mathbf{J}_{\mathcal{P}, i} \mathbf{J}_{\mathcal{D}, i} \frac{\partial \rm{vec}(\mathbf{H}^{(k, i)})}{\partial \rm{vec}(\mathbf{H}^{(k, 0)})},
\end{equation}
where $\mathbf{I}$ is the identity matrix resulting from differentiating the anchor term $t'\mathbf{H}^{(k, 0)}$. Using the definition of the composite Jacobian $\mathbf{T}_i$, the dynamics follow a non-homogeneous linear recurrence:
\begin{equation}
    \label{eq:recurrence_relation}
    \mathbf{J}^{(i+1)} = t'_i \mathbf{I} + (1-t'_i) \mathbf{T}_i \mathbf{J}^{(i)}.
\end{equation}

\subsection{Recursive Expansion and Closed-Form Solution}

The base case for the recurrence is $\mathbf{J}^{(0)} = \mathbf{I}$. To reveal the structure of the total Jacobian, let us define the path transition operator $\mathbf{\Phi}_{m, j}$, which represents the accumulated decay and rotation from step $j$ to $m$:
\begin{equation}
    \mathbf{\Phi}_{m, j} \triangleq \begin{cases}
        \prod_{n=j}^{m-1} (1-t'_n) \mathbf{T}_n, & \text{if } m > j, \\
        \mathbf{I}, & \text{if } m = j.
    \end{cases}
\end{equation}
Expanding the recurrence relation \eqref{eq:recurrence_relation} iteratively:
\begin{align}
    \mathbf{J}^{(1)} &= t'_0 \mathbf{\Phi}_{1, 1} + \mathbf{\Phi}_{1, 0}, \nonumber \\
    \mathbf{J}^{(2)} &= t'_1 \mathbf{\Phi}_{2, 2} + t'_0 \mathbf{\Phi}_{2, 1} + \mathbf{\Phi}_{2, 0}.
\end{align}
By induction, the total Jacobian $\mathbf{J}_{\mathcal{T}} \triangleq \mathbf{J}^{(N_2)}$ at the end of the inner loop can be expressed as a weighted sum of historical transition paths:
\begin{equation}
    \label{eq:closed_form_expansion}
    \mathbf{J}_{\mathcal{T}} = \underbrace{\mathbf{\Phi}_{N_2, 0}}_{\text{Residual of Initialization}} + \sum_{i=0}^{N_2-1} \underbrace{t'_i \mathbf{\Phi}_{N_2, i+1}}_{\text{Accumulated Anchor Injections}}.
\end{equation}

\subsection{Strict Contraction Analysis via Partition of Unity}

We now apply the induced norm $\|\cdot\|_*$ defined in \eqref{eq:induced_contraction} to the closed-form expansion \eqref{eq:closed_form_expansion}. Using the triangle inequality and the sub-multiplicativity property of the induced norm, we obtain:
\begin{equation} \label{norm_inequality}
    \|\mathbf{J}_{\mathcal{T}}\|_* \le \|\mathbf{\Phi}_{N_2, 0}\|_* + \sum_{i=0}^{N_2-1} t'_i \|\mathbf{\Phi}_{N_2, i+1}\|_*.
\end{equation}
The norm of the transition operator is bounded by the product of individual operator norms. Using $\|\mathbf{T}_n\|_* \le \gamma' < 1$, we have:
\begin{equation}
    \|\mathbf{\Phi}_{N_2, j}\|_* \le \prod_{n=j}^{N_2-1} (1-t'_n) \|\mathbf{T}_n\|_* \le (\gamma')^{N_2-j} \prod_{n=j}^{N_2-1} (1-t'_n).
\end{equation}
For brevity, let us define the cumulative scalar decay factor $P_j$:
\begin{equation}
    P_j \triangleq \prod_{n=j}^{N_2-1} (1-t'_n), \quad \text{with } P_{N_2} \triangleq 1.
\end{equation}
Substituting this bound into the inequality \eqref{norm_inequality}, we arrive at:
\begin{equation}
    \label{eq:norm_bound_final}
    \|\mathbf{J}_{\mathcal{T}}\|_* \le P_0 (\gamma')^{N_2} + \sum_{i=0}^{N_2-1} t'_i P_{i+1} (\gamma')^{N_2-(i+1)}.
\end{equation}

To prove strict contraction, we utilize the algebraic property of the weighting coefficients. Observe that the recurrence relation $P_i = (1-t'_i)P_{i+1}$ implies $t'_i P_{i+1} = P_{i+1} - P_i$. Consequently, the sum of the coefficients (excluding the contraction factor $\gamma'$) forms a telescoping series:
\begin{equation}
    P_0 + \sum_{i=0}^{N_2-1} t'_i P_{i+1} = P_0 + \sum_{i=0}^{N_2-1} (P_{i+1} - P_i) = P_{N_2} = 1.
\end{equation}
This confirms that the weights form a partition of unity. Thus, the upper bound in \eqref{eq:norm_bound_final} represents a convex combination of the terms $(\gamma')^k$.

Since $\gamma' < 1$ and $N_2 \ge 1$, the factor $(\gamma')^k$ is strictly less than $1$ for all $k \ge 1$. Specifically, the exponents in \eqref{eq:norm_bound_final} are $N_2, N_2-1, \dots, 0$. Unless $t'_{N_2-1}=1$, the sum contains terms scaled by $\gamma' < 1$. Thus, the strict inequality holds:
\begin{equation}
    \|\mathbf{J}_{\mathcal{T}}\|_* < P_0 \cdot 1 + \sum_{i=0}^{N_2-1} t'_i P_{i+1} \cdot 1 = 1.
\end{equation}
Since $\|\mathbf{J}_{\mathcal{T}}\|_* < 1$, the mapping is a strict contraction in the metric space $(\mathbb{C}^N, \|\cdot\|_*)$. Therefore, the proof is complete. \hfill

\end{appendices}

\small
\bibliographystyle{IEEEtran}
\bibliography{bib}
\vspace{12pt}

\end{document}